\documentclass[aps,prx,reprint,superscriptaddress,longbibliography,twocolumn,groupedaddress,
nofootinbib
]{revtex4-2}

\usepackage[dvipdfmx]{graphicx}
\graphicspath{{./}{./figure/}}

\usepackage[utf8]{inputenc}
\usepackage{CJKutf8}

\usepackage{amsmath,amssymb,bm,braket}
\usepackage{enumitem}
\usepackage{comment}
\usepackage{multirow}
\usepackage{tabularx}
\newcolumntype{Y}{>{\centering\arraybackslash}X} 

\usepackage{xcolor}

\usepackage{diagbox}

\usepackage{amsthm}
\newtheorem{thm}{Main Theorem}
\newtheorem*{thm*}{Main Theorem}
\newtheorem{theorem}[thm]{Theorem}
\newtheorem{lem}[thm]{Lemma}
\newtheorem*{lem*}{Lemma}

\newtheorem*{conj*}{Conjecture}
\theoremstyle{definition}
\newtheorem*{dfn*}{Definition}
\newtheorem{dfn}{Definition}
\usepackage{nameref}

\usepackage{mathtools}
\usepackage{cancel}

\newcommand{\eq}[1]{\begin{align} #1 \end{align}}
\newcommand{\eqa}[2]{\begin{equation} #1 \label{#2} \end{equation}}

\newcommand{\eref}[1]{Eq.~\eqref{#1}}
\newcommand{\tref}[1]{Theorem~\ref{t:#1}}

\newcommand{\lref}[1]{Lemma~\ref{t:#1}}

\newcommand{\sref}[1]{Sec.~\ref{s:#1}}
\newcommand{\ssref}[1]{Subsec.~\ref{s:#1}}

\newcommand{\Tr}{\mathrm{Tr}}
\newcommand{\ft}[2]{\left. #1 \right|_{#2}}

\def \({\left(}
\def \){\right)}
\def \[{\left[}
\def \]{\right]}

\newcommand{\bsS}{\boldsymbol{S}}

\newcommand{\bbC}{\mathbb{C}}

\newcommand{\bbZ}{\mathbb{Z}}

\newcommand{\Di}{\mathit{\Delta}}
\newcommand{\nt}{\notag}
\newcommand{\lb}[1]{\label{#1}}
\newcommand{\del}{\partial}

\newcommand{\QB}[1]{Q^{\rm B}_{#1}}
\newcommand{\QT}[1]{Q_{#1}}

\newcommand{\bpf}{\begin{proof}}
\newcommand{\epf}{\end{proof}}

\newcommand{\blm}[1]{\begin{lem} #1 \end{lem}}

\newcommand{\cR}{\check{R}}

\newcommand{\balign}[1]{\begin{align} #1 \end{align}}
\newcommand{\mxs}[1]{\( \begin{smallmatrix}#1 \end{smallmatrix}\)}

\newcommand{\cF}{\check{F}}
\newcommand{\pQ}[1]{Q^{\rm phys}_{#1}}

\usepackage{hyperref}
\hypersetup{colorlinks=true, citecolor=magenta, linkcolor=blue, urlcolor=cyan}
\makeatletter
\providecommand{\href@noop}[2]{#2}

\makeatother

\makeatletter
\def\@hangfrom@section#1#2#3{\@hangfrom{#1#2}#3}
\def\@hangfroms@section#1#2{#1#2}
\makeatother

\newcommand{\titlename}{A Simple Necessary and Sufficient Condition for Yang--Baxter Integrability}

\begin{document}
\title{\titlename}

\author{Mizuki Sanatani}
\email{yamaguchi-q@g.ecc.u-tokyo.ac.jp}
\affiliation{
Graduate School of Arts and Sciences, The University of Tokyo, 3-8-1 Komaba, Meguro, Tokyo 153-8902, Japan}

\author{Naoto Shiraishi}
\email{shiraishi@phys.c.u-tokyo.ac.jp}
\affiliation{
Graduate School of Arts and Sciences, The University of Tokyo, 3-8-1 Komaba, Meguro, Tokyo 153-8902, Japan}

\author{Fuga Ishii}
\email{fuga@i-shi-i.name}
\affiliation{
College of Arts and Sciences, The University of Tokyo, 3-8-1 Komaba, Meguro, Tokyo 153-8902, Japan}

\begin{abstract}
Quantum integrability is a cornerstone of the exact theory of interacting quantum spin chains.
In its standard formulation, however, one starts from R-matrices satisfying the Yang--Baxter equation, rather than from the Hamiltonian itself.
It has therefore remained unclear how Yang--Baxter solvability can be characterized directly at the Hamiltonian level, and how it is related to the existence of local conservation laws.
Here we prove that, in a broad standard setting, the Reshetikhin condition is not only necessary but also sufficient for Yang--Baxter integrability, thereby reducing the hidden algebraic structure of integrability to a Hamiltonian-level conservation law.
Since the Reshetikhin condition is equivalent to conservation of the total energy current, this Hamiltonian-level criterion is also experimentally accessible. 
This result establishes a quantum counterpart of the Liouville--Arnold theorem for isotropic spin chains, stating that Yang--Baxter solvability is equivalent to an infinite hierarchy of local conserved quantities. 
Our result also simplifies substantially the search for integrable spin chains by replacing the search for R-matrices with a direct criterion on local Hamiltonians.

\end{abstract}

\maketitle

Quantum integrability provides a privileged arena in which we can analytically probe quantum many-body systems. 
In one-dimensional quantum spin chains, it allows us to obtain exact results for spectra, eigenstates, correlation functions, transport and thermodynamics~\cite{Bethe1931Metalle,Onsager1944CrystalStatistics,Yang1967ManyBody,Baxter1972EightVertex, baxter1985exactly, takahashi1999thermodynamics}.
These results make integrable models a testing ground where general ideas in condensed-matter and statistical physics can be examined directly from exact solutions.
Integrable spin chains therefore serve as reference points in quantum many-body physics, where exact analytical results are otherwise difficult to obtain.

The standard route to quantum integrability starts from an $R$-matrix and an auxiliary system~\cite{TakhtajanFaddeev1979InverseProblem,SklyaninTakhtajanFaddeev1980InverseProblem,tetel1982lorentz,Thacker1986CornerTransfer,faddeev1996algebraic}. 
In the Yang--Baxter framework, the $R$-matrix satisfies the Yang--Baxter equation, which ensures the consistent factorization of many-body processes into two-body ones. 
From such an $R$-matrix, one constructs a commuting family of transfer matrices whose logarithmic derivatives generate local conserved quantities.
The Hamiltonian is then identified with the two-local quantity.
In this sense, the Hamiltonian appears only at the end of the construction, rather than serving as its starting point.
Other quantities including the spectra, eigenstates, thermodynamic properties and correlation functions are subsequently derived by exploiting this commuting transfer-matrix structure.
This construction is powerful because it provides a systematic route from an algebraic equation to exact solvability.
Consequently, research on quantum integrability has historically focused on finding $R$-matrices satisfying the Yang--Baxter equation and on computing physical quantities using the Yang--Baxter structure.

This $R$-matrix-first programme has been remarkably successful in constructing and solving integrable models, yet it also leaves several structural questions about integrability largely unresolved. 
One central open problem concerns the relation between Yang--Baxter solvability and local conserved quantities. 
In classical Hamiltonian mechanics, integrability is characterized by the existence of a sufficient number of independent commuting conserved quantities (first integrals), as formulated in the Liouville--Arnold theorem. 
In quantum spin chains, however, whether the presence of a local integrable hierarchy implies Yang--Baxter solvability has been left unsolved.
To the best of our knowledge, all known examples exhibit these two structures together~\cite{GrabowskiMathieu1995IntegrabilityTest, ParkLee2025SpinOne, HokkyoYamaguchiChiba2025SpinOne, YamaguchiChibaShiraishi2024NearestNeighbor, yamaguchi2024proof, Shiraishi2025NextNearestNeighbor}, but no general principle explaining this coincidence has been established. 
Determining whether this connection is universal is fundamental, because it bears directly on what quantum integrability means.

Another limitation is the lack of a direct criterion for integrability at the level of the Hamiltonian.
A quantum spin chain is usually specified by its local Hamiltonian, rather than by an $R$-matrix, whereas the Yang--Baxter construction starts from the $R$-matrix and obtains the Hamiltonian only as its derivative.
In this sense, the standard framework explains how to derive Hamiltonians from $R$-matrices, but does not directly address the inverse problem: given a local Hamiltonian, how can one decide whether it originates from a Yang--Baxter structure?
In practice, establishing integrability requires finding a suitable $R$-matrix, a nontrivial task that often demands considerable ingenuity~\cite{hietarinta1992all, IdzumiTokihiroArai1994NineteenVertex, MutterSchmitt1995SpinOne,Bibikov2000Derivation, Bibikov2003TaylorExpansion, Bibikov2007DefiningRelations,BibikovNuramatov2014Rmatrices,fonseca2015r, Vieira2018DifferentialApproach, deLeeuwPribytokRyan2019SpinHalf, deLeeuwEtAl2020Superconductivity, deLeeuwEtAl2021Boost, deLeeuwPosch2024All4x4, maity2024algebraic}.
This makes it difficult to determine Yang--Baxter solvability directly from the Hamiltonian itself.

Faced with this limitation, researchers treating quantum integrable systems have used the {\it Reshetikhin condition} as a practical, albeit incomplete, test of integrability~\cite{KulishSklyanin1982QuantumSpectral}.
The Reshetikhin condition is the lowest-order nontrivial constraint obtained by expanding the Yang--Baxter equation around a regular point, which is written solely in terms of the local Hamiltonian density. 
In many spin-chain settings, this condition is identified with the conservation of the total energy current~\cite{zotos1997transport, klumper2002thermal}.
Of course, a priori the Reshetikhin condition is only a necessary condition of Yang--Baxter integrability.
Namely, a Hamiltonian could satisfy the Reshetikhin condition and nevertheless fail to admit an $R$-matrix satisfying the Yang--Baxter equation, because the Yang--Baxter equation imposes infinitely many higher-order constraints in the same expansion. 
However, surprisingly, all previously studied Hamiltonians satisfying the Reshetikhin condition have ultimately turned out to be Yang--Baxter integrable~\cite{JimboMiwa1984Differential,hietarinta1992all, Kennedy1992IsotropicSpinChains, BatchelorYung1994Haldane,IdzumiTokihiroArai1994NineteenVertex,MutterSchmitt1995SpinOne,  Bibikov2000Derivation,Bibikov2003TaylorExpansion, Bibikov2007DefiningRelations, BibikovNuramatov2014Rmatrices, Vieira2018DifferentialApproach, Shiraishi2019XYZ, deLeeuwPribytokRyan2019SpinHalf, deLeeuwEtAl2020Superconductivity, deLeeuwEtAl2021Boost, deLeeuwPosch2024All4x4, maity2024algebraic, ParkLee2025SpinOne, HokkyoYamaguchiChiba2025SpinOne, YamaguchiChibaShiraishi2024NearestNeighbor, yamaguchi2024proof, Shiraishi2025NextNearestNeighbor}.
This striking coincidence led to the conjecture raised by several research groups in the early 1980s asserting that the Reshetikhin condition is also a sufficient condition for Yang--Baxter solvability~\cite{KulishSklyanin1982QuantumSpectral, OttingerHonerkamp1982YangBaxter, JimboMiwa1984Differential}.
Despite its conceptual importance, practical success, and repeated refrain in various papers~\cite{GrabowskiMathieu1995IntegrabilityTest, deLeeuwEtAl2021Boost, corcoran2024all, zhang2026bootstrapping}, this ambitious conjecture has remained open for more than forty years.

Here we show that this long-standing expectation is correct. 
For quantum spin chains with translation-invariant nearest-neighbour interactions, we rigorously prove that the Reshetikhin condition is not only necessary but also sufficient for Yang--Baxter integrability. 
More precisely, whenever a local Hamiltonian satisfies the
Reshetikhin condition, one can always construct a regular analytic solution of the
difference-form Yang--Baxter equation order by order in the
spectral parameter.
Thus the infinitely many higher-order constraints of the Yang--Baxter equation collapse to its lowest nontrivial Hamiltonian-level condition.
This rigidity also offers a structural explanation for why the Yang--Baxter equation occupies such a privileged position in one-dimensional quantum integrability.

This result has profound implications in several directions.
First, our result substantially simplifies the search for new integrable models.
Conventionally, such searches have often required finding a suitable $R$-matrix in the large space of possible $R$-matrices~\cite{lal2025deep}. 
Usually, this is a difficult task, since the $R$-matrix contains much more information than the local Hamiltonian itself.
Our theorem allows us instead to search directly in the space of local Hamiltonians, using the Reshetikhin condition as a complete Hamiltonian-level criterion for Yang--Baxter integrability.
In addition, the result suggests a possible experimental implication: conservation of the total energy current provides a direct signature of Yang--Baxter solvability, which can be probed in the laboratory.
This reveals the striking fact that the highly mathematical property of exact solvability can be diagnosed through the experimental measurement of a single physical observable.
Moreover, our result opens a programme towards identifying a universal class of local algebras whose Baxterization yields solutions of the Yang--Baxter equation. 
Since the only Hamiltonian-level restriction is the Reshetikhin condition, the problem reduces to classifying algebraic relations consistent with this condition. 
This provides an inverse perspective on the conventional Baxterization approach~\cite{Jimbo1986QAnalogue, Jones1991Baxterization, AlcarazEtAl1994ReactionDiffusion, Bibikov2007DefiningRelations}.

Our result also refines the notion of quantum integrability, by establishing a quantum analogue of the Liouville--Arnold theorem for isotropic spin chains: Yang--Baxter solvability is equivalent to the existence of an infinite family of local conserved quantities. 
It thus closes the gap between the presence of local conservation laws and exact solvability in the quantum setting. 
Even in general quantum spin chains, we can show a Liouville--Arnold-type connection that Yang--Baxter solvability is equivalent to the existence of an infinite family of local conserved quantities generated by the boost operator.

\begin{figure*}[t]
\centering
\begin{minipage}{0.40\textwidth}
\centering
\textbf{(a)}\par
\includegraphics[width=\linewidth]{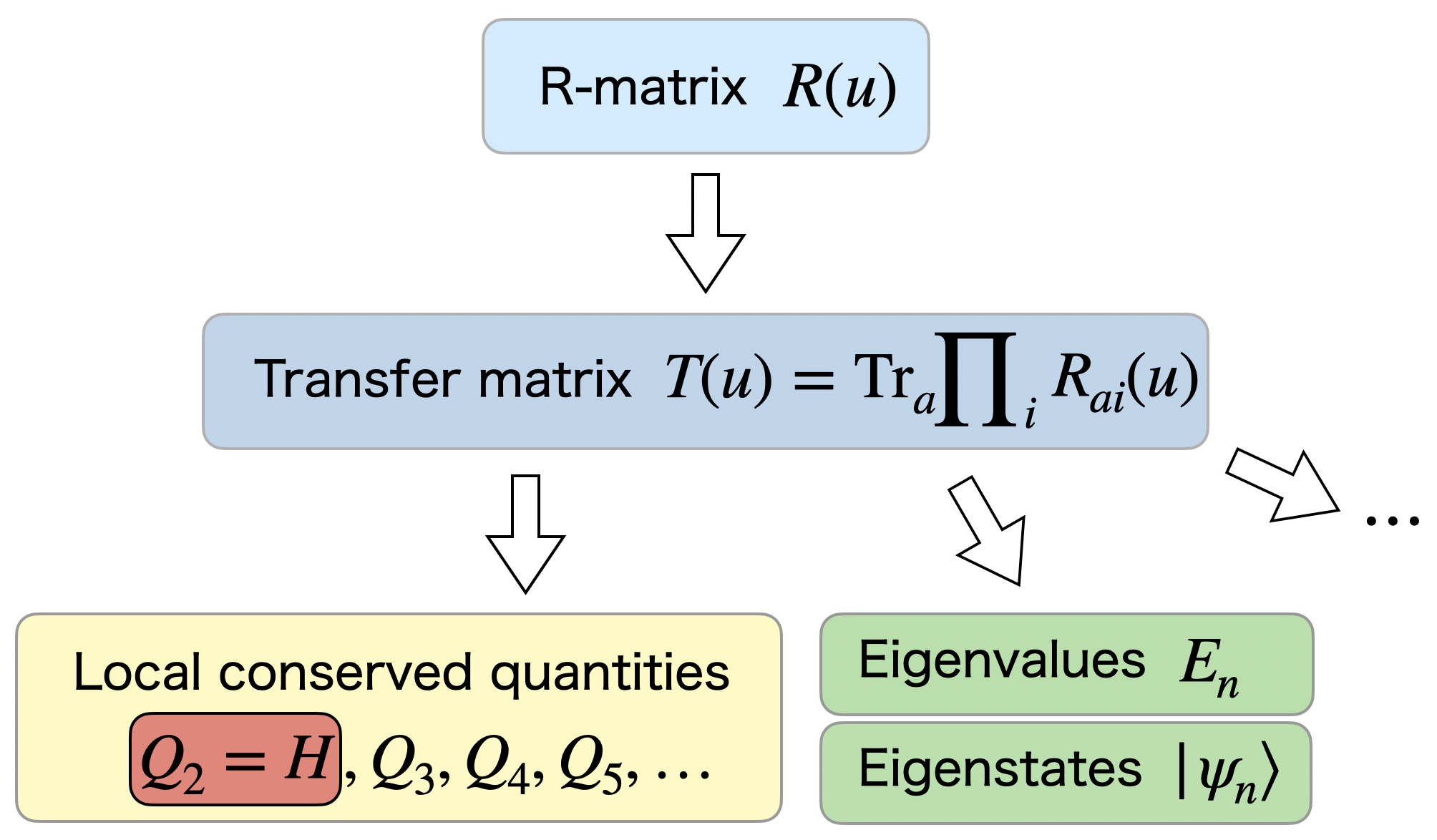}
\end{minipage}
\hfill
\begin{minipage}{0.56\textwidth}
\centering
\textbf{(b)}\par
\includegraphics[width=\linewidth]{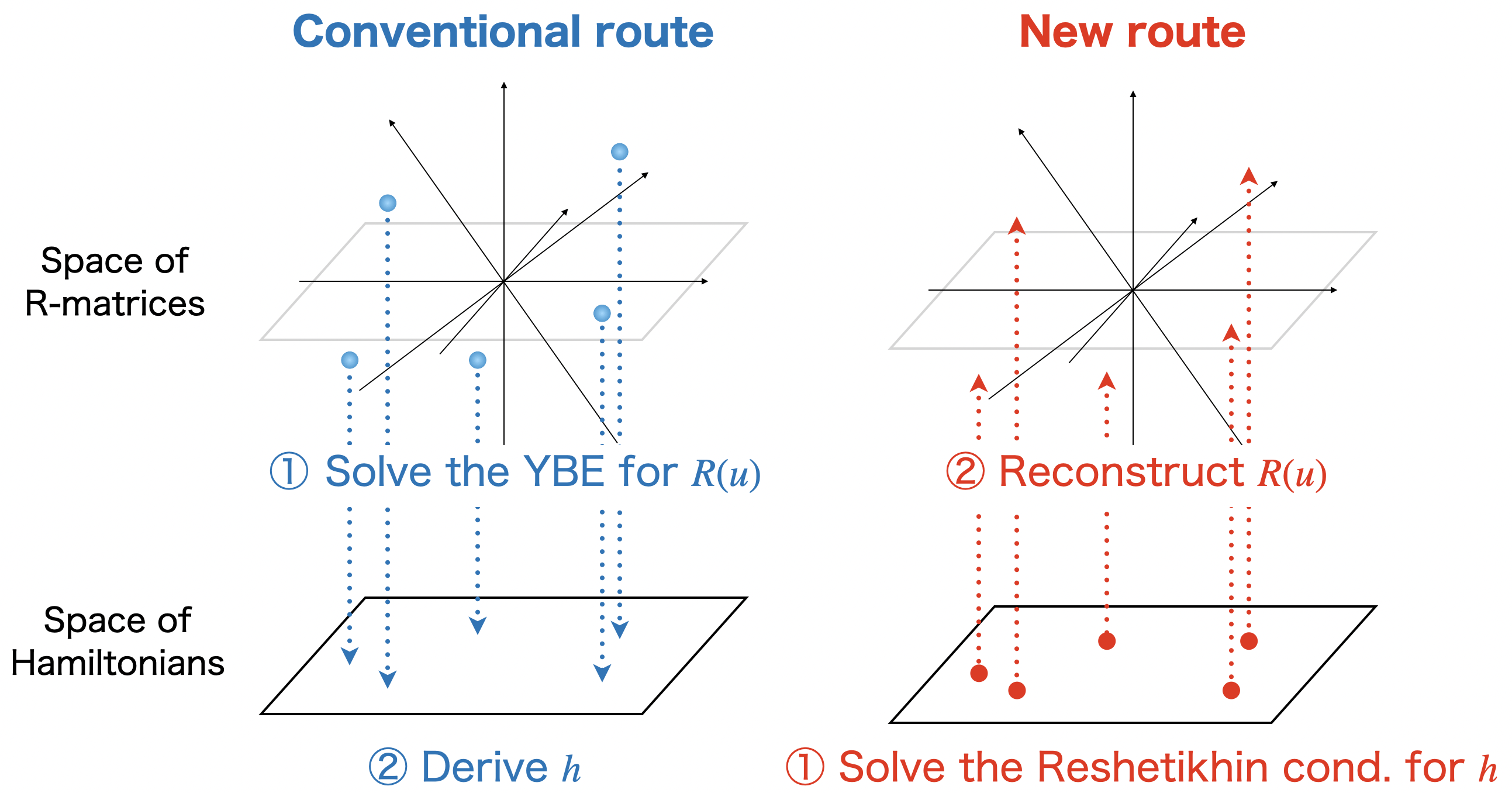}
\end{minipage}
\caption{\textbf{Reversing the search for Yang--Baxter integrability.}
\textbf{a,} In the established algebraic framework, one starts from an $R$-matrix satisfying the Yang--Baxter equation and constructs the transfer matrix $T(u)$.
This construction provides a mutually commuting family of local quantities $Q_2,Q_3,Q_4,\ldots$, together with access to spectra, eigenstates, correlation functions, transport coefficients and thermodynamics.
The Hamiltonian appears in this construction as the first nontrivial charge in the hierarchy, $H=Q_2$.
\textbf{b,} In the conventional search picture, the integrable locus is identified by solving the Yang--Baxter equation in the space of $R$-matrices, and Hamiltonian densities are then obtained by projecting these solutions down to Hamiltonian space.
The search is therefore indirect and inefficient for discovering integrable Hamiltonians.
The present work reverses this direction: the integrable locus is identified directly in Hamiltonian space by the Reshetikhin condition, and the corresponding $R$-matrix is reconstructed afterwards.}
\label{f:integrability-framework}
\end{figure*}
\section*{The Yang--Baxter equation and integrability}
We consider a translation-invariant nearest-neighbour quantum spin chain of length $L$ with periodic boundary conditions,
$H=\sum_{i=1}^L h_{i,i+1}$, where site $L+1$ is identified with site 1.
We say that this Hamiltonian is Yang--Baxter integrable if, at the end of the following procedure, one obtains this Hamiltonian.
Suppose that there exists a regular spectral-parameter-dependent $R$-matrix $R_{ij}(u)$ on two sites $i$ and $j$ satisfying the following difference-form {\it Yang--Baxter equation}
\eqa{
    R_{12}(u)R_{13}(u+v)R_{23}(v)=R_{23}(v)R_{13}(u+v)R_{12}(u)
}{YBE}
in a neighborhood of $(u,v)=(0,0)$, together with the regularity condition $R(0)=\Pi$, where $\Pi$ is the two-site permutation operator.
Throughout this article, unless otherwise explicitly stated, the Yang--Baxter equation is understood in difference form, and all $R$-matrices are assumed to be regular.
Introducing an auxiliary site $a$, we define the transfer matrix $T(u):=\Tr_a[\prod_{i=1}^L R_{ai}(u)]$.
The Yang--Baxter equation implies the commutativity of transfer matrices at different spectral parameters, which confirms that their logarithmic derivatives generate a hierarchy of commuting local quantities:
\eqa{
Q_n=\ft{\frac{d^{n-1}}{du^{n-1}}\ln T(u)}{u=0}.
}{QT}
Here, each $Q_n$ is shown to be an $n$-local quantity. 
In particular, we identify the two-local member of this hierarchy $Q_2$ with the Hamiltonian; $Q_2=H$.
This commuting transfer-matrix structure provides the algebraic basis for accessing eigenenergies and eigenstates, since the family $T(u)$, which can be expressed in terms of commuting local quantities, can be diagonalized simultaneously and the Hamiltonian is obtained from it.
These local conserved quantities underlie the distinctive dynamics of integrable systems, including relaxation to generalized Gibbs ensembles~\cite{cazalilla2006effect, rigol2007relaxation} and transport described by generalized hydrodynamics~\cite{castro2016emergent}.

Note that we have another route to the Hamiltonian from the $R$-matrix.
Consider a series expansion of $\Pi R(u)$ in terms of $u$.
Regularity fixes the zeroth-order term to the identity.
Namely, the local Hamiltonian is also directly obtained from the ordinary derivative of the $R$-matrix as $h=\Pi R'(0)$.
More generally, we can expand a regular $R$-matrix as
\eqa{
R(u)=\Pi\(1+hu+\sum_{n=2}^\infty {\cR}^{(n)}u^n\).
}{R-expand}

In both routes, the Hamiltonian is obtained only as the derivative of a much richer object.
Yang--Baxter integrability is therefore not expressed as a closed condition on $h$ itself, which is why testing and discovering integrable Hamiltonians have been difficult tasks.

\section*{Simple criterion for Yang--Baxter integrability}
To obtain Hamiltonian-level constraints, we substitute the series expansion \eqref{R-expand} into Eq.~(\ref{YBE}).
The second order equation can always be satisfied by choosing ${\cR}^{(2)}=h^2/2$.
On the other hand, whether the third order equation has a solution depends on the Hamiltonian: it admits a solution if and only if the following nested commutator can be written as a telescopic difference,
\eqa{
[h_{12}+h_{23}, [h_{12}, h_{23}]]=X_{23}-X_{12},
}{Reshetikhin}
where $X_{ij}$ is a suitable operator acting on two sites $i$ and $j$.
This condition is called the {\it Reshetikhin condition}~\cite{KulishSklyanin1982QuantumSpectral}.
Because the left-hand side is not generally of telescopic form, Eq.~(\ref{Reshetikhin}) is a nontrivial necessary condition for a local Hamiltonian to admit a Yang--Baxter structure.

The Reshetikhin condition has therefore been used as a practical test of integrability in searches for new integrable models.
Yet it is only the lowest-order nontrivial relation obtained from the Yang--Baxter equation, and thus a Hamiltonian could in principle satisfy the Reshetikhin condition while failing at some higher order.
Remarkably, however, all known models that pass the Reshetikhin condition have ultimately been found to admit an $R$-matrix satisfying the Yang--Baxter equation.
This observation motivated the conjecture, according to which the Reshetikhin condition should already imply the full Yang--Baxter equation. 
Nevertheless, this hypothesis has remained unresolved for more than four decades.

We turn this empirical coincidence into a theorem:
\begin{theorem}\lb{t:main}
Consider a one-dimensional, translation-invariant spin chain with nearest-neighbour interactions, whose local Hilbert space is finite-dimensional.
Suppose that a two-site Hamiltonian density $h$ satisfies the Reshetikhin condition \eqref{Reshetikhin} and that $H=\sum_{i=1}^L h_{i,i+1}$ is diagonalizable.
Then there exists a regular $R$-matrix, analytic near $u=0$, satisfying the Yang--Baxter equation \eqref{YBE} whose corresponding transfer matrix reproduces $H$ as its logarithmic derivative.
\end{theorem}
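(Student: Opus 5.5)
The strategy is to build the $R$-matrix from the Hamiltonian through the boost operator rather than by solving the Yang--Baxter equation order by order.

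\textbf{Step 1: the charge hierarchy.} Let $B=\sum_j j\,h_{j,j+1}$ be the formal boost operator on the infinite chain, and define charges recursively by
\begin{equation}
Q_2=H,\qquad Q_{n+1}=[B,Q_n].
\end{equation}
In the Yang--Baxter setting one has $\tfrac{d}{du}T(u)\propto[B,T(u)]$, so the transfer matrix generated by these charges reproduces \eqref{QT}. The Reshetikhin condition \eqref{Reshetikhin} is exactly the statement that $[Q_2,Q_3]=0$ with
\begin{equation}
Q_3=\sum_j\Big([h_{j-1,j},h_{j,j+1}]+X_{j,j+1}\Big)
\end{equation}
local. In particular, \eqref{Reshetikhin} is equivalent to conservation of the energy current, since $Q_3$ is essentially that current.

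\textbf{Step 2: local commuting charges to all orders.} This is the core of the argument and the step I expect to be the main obstacle. I would prove by induction that each $Q_n$ is, after adding suitable $(n-1)$-local corrections, an $n$-local operator commuting with all previous charges.

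The induction uses the Jacobi identity:
\begin{equation}
[Q_2,Q_{n+1}]=[[Q_2,B],Q_n].
\end{equation}
Here $[Q_2,B]$ equals $-Q_3$ plus a boundary term, which is harmless on the infinite chain. Commutativity of $Q_3$ with $Q_n$ must itself be propagated, which requires controlling the commutators among all the charges simultaneously.

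My first attempt would be to work with a generating function rather than individual charges. I would define a candidate $\check R(u)$ order by order from the boost flow, then show that the Yang--Baxter defect
\begin{equation}
\mathcal{D}(u,v)=R_{12}(u)R_{13}(u+v)R_{23}(v)-R_{23}(v)R_{13}(u+v)R_{12}(u)
\end{equation}
satisfies a linear first-order differential equation in $u$ and $v$. That equation is homogeneous modulo terms that vanish precisely by \eqref{Reshetikhin}. Since $\mathcal{D}$ vanishes through third order, uniqueness for this linear system would force $\mathcal{D}\equiv 0$. This is the Sutherland/Kulish--Sklyanin differential form of the Yang--Baxter equation.

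\textbf{Step 3: constructing $R$ and proving analyticity.} I would take the ansatz
\begin{equation}
R(u)=\Pi\exp\!\big(u\,h+\text{corrections fixed by } X\big).
\end{equation}
More precisely, I would define $R$ as the solution of the ODE
\begin{equation}
\frac{d}{du}\check R(u)=F\big(h,X,\check R(u)\big)
\end{equation}
obtained from the Sutherland equation with the two-site gauge freedom fixed, and with initial condition $\check R(0)=1$. Analyticity near $u=0$ then follows from standard ODE theory, since $F$ is polynomial in finite-dimensional matrices.

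I expect the diagonalizability of $H$ to enter when solving for the gauge/correction term at each order. There one needs to invert $\mathrm{ad}_h$-type operators on the complement of their kernel, and semisimplicity guarantees that the relevant cokernel obstruction is controlled by the Reshetikhin data alone.

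\textbf{Step 4: recovering $H$.} Finally, I would check that the resulting $R$ satisfies $\Pi R'(0)=h$ up to the trivial shift by a multiple of the identity, which fixes $\ln T$. This gives $Q_2=H$ and completes the proof.
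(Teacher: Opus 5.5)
There is a genuine gap, and it sits exactly at the step you identify as the main obstacle. Nothing in your proposal explains why the higher obstructions on the Sutherland line vanish once the Reshetikhin condition holds. These obstructions are the coefficients $F^{k,1}_{123}$ of $u^kv$ (in braided form $\cF^{1,k}_{123}=-\Pi_{13}F^{k,1}_{123}$). A priori they are nontrivial for every even $k$; the second integrability test at order $uv^4$ is the first one beyond Reshetikhin.

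Your Step 2 asserts that the defect $\mathcal D(u,v)$ obeys a linear first-order equation that is homogeneous ``modulo terms that vanish by the Reshetikhin condition''. No such equation is derived, and it is not clear one exists. The natural source terms in any differential identity for $\mathcal D$ built from the Sutherland equation are all the residuals $F^{k,1}_{123}$, and the Reshetikhin condition removes only the one at $k=2$. Claiming that the rest vanish is precisely the content of the theorem.

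Step 3 has the same hole. Projecting the three-site Sutherland equation to a two-site ODE (for instance by a partial trace) always gives an analytic $\check R(u)$. But that $\check R$ satisfies the unprojected Sutherland equation only if, at every order $n$, the three-site residual is telescopic: $\cF^{1,n-1}_{123}=Z_{23}-Z_{12}$. Even granting the commuting hierarchy of Step 2, you have no mechanism that turns global conservation laws into this local telescopic property. Your guess that diagonalizability enters through inverting local $\mathrm{ad}_h$-type maps also points at the wrong object. (Separately, adding $\sum_j X_{j,j+1}$ to $Q_3$ in Step 1 generally destroys conservation, e.g.\ for XXZ; the conserved charge is $\QB{3}=[B,H]$.)

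The paper closes the gap with global arguments.
\begin{enumerate}
\item[(i)] Hokkyo's lemma. If $[H,\QB{3}]=0$ and $[H,\QB{n}]=0$, the Jacobi identity gives $[H,[H,\QB{n+1}]]=-[[H,\QB{3}],\QB{n}]-[\QB{3},[H,\QB{n}]]=0$. Diagonalizability of the many-body $H$ gives $\ker\mathrm{ad}_H^2=\ker\mathrm{ad}_H$, hence $[H,\QB{n+1}]=0$. This removes your worry about controlling all commutators at once.
\item[(ii)] Suppose the Sutherland residuals vanish below order $n-1$. The modified Sutherland equation then gives $[B,T(u)]=T'(u)+O(u^{n-1})$, so $Q_m=\QB{m}$ for $m\le n$. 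Hence $T(u)=S\exp\bigl(\sum_{m=2}^{n}u^{m-1}\QB{m}/(m-1)!\bigr)+O(u^n)$, and by (i) $[T(u),H]=O(u^n)$.
\item[(iii)] Compute $[T(u),H]$ a second way, by telescoping the modified Sutherland equation along the chain. This gives $-S\sum_i\cF^{1,n-1}_{i-1,i,i+1}u^{n-1}+O(u^n)$. Comparing with (ii) forces $\sum_i\cF^{1,n-1}_{i-1,i,i+1}=0$, hence the telescopic form, which is absorbed into $\check R^{(n)}$.
\end{enumerate}

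You would then still need a separate argument that the exact Sutherland equation kills the $u^kv^l$ coefficients with $k,l\ge2$, which it does not control directly. The paper derives a cocycle identity for the degree-$n$ defect $\omega^n$ by comparing the two braid-move sequences from $ABACBA$ to $CBACBC$; a trace/determinant argument removes a possible scalar term. The cocycle is a coboundary, so $\omega^n(u,v)=K_n\bigl((u+v)^n-u^n-v^n\bigr)$. The vanishing $uv^{n-1}$ coefficient $nK_n$ then forces $\omega^n=0$.
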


Thus the Reshetikhin condition is not merely a low-order necessary test; in this setting it is a complete Hamiltonian-level criterion for Yang--Baxter integrability.
Instead of first finding a suitable $R$-matrix, one can verify integrability by checking the nested-commutator identity \eqref{Reshetikhin} directly.

\begin{figure*}[t]
\centering
\includegraphics[width=0.59\textwidth]{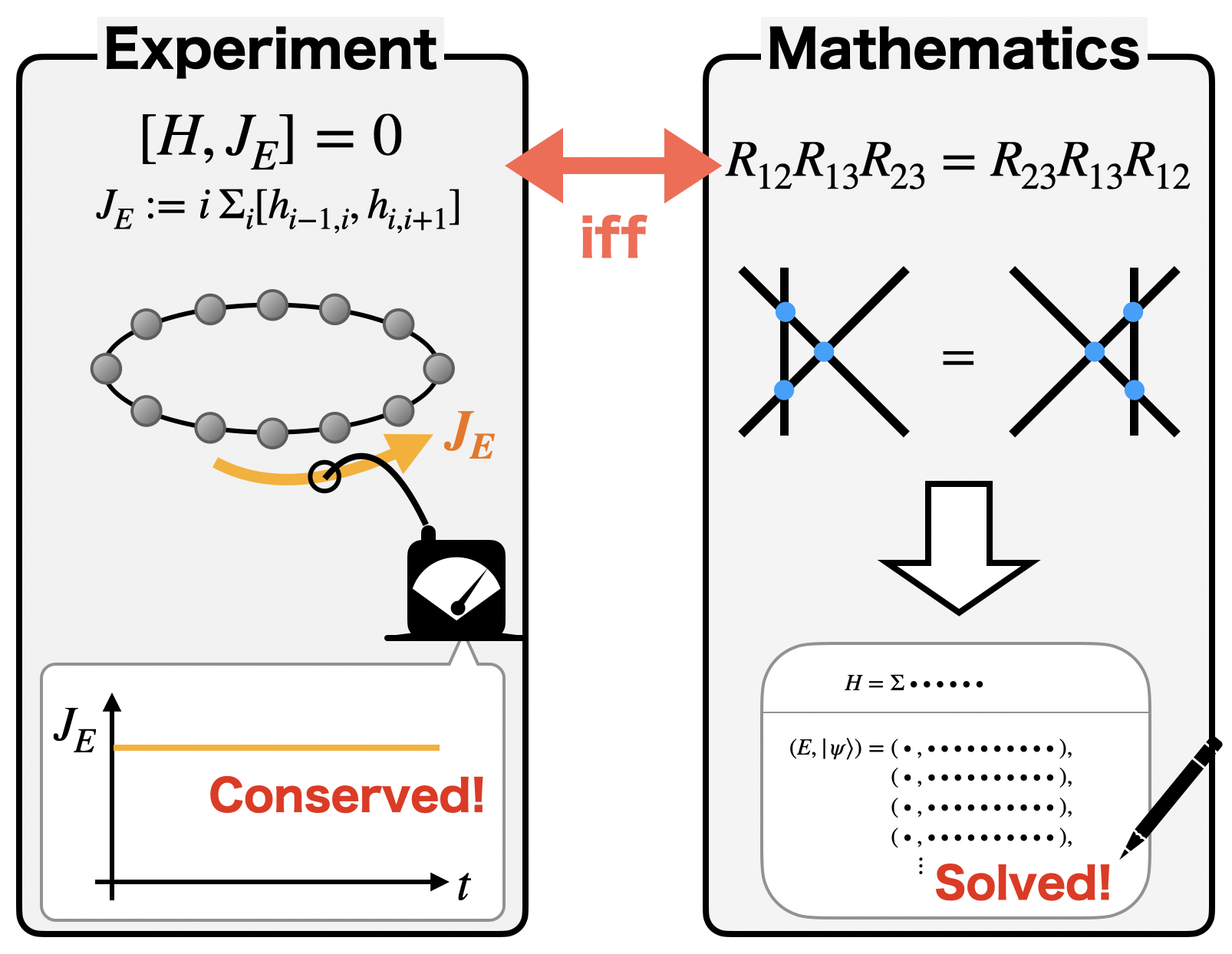}
\caption{\textbf{Energy-current conservation as an experimental certificate of Yang--Baxter integrability.}
The total energy current provides an experimentally accessible probe of exact solvability in an interacting quantum spin chain.
Our theorem identifies conservation of the total energy current $J_E=i\sum_j[h_{j-1,j},h_{j,j+1}] (=-iQ_3^B)$ with regular difference-form Yang--Baxter integrability.
Thus, energy-current conservation holds if and only if the Hamiltonian belongs to this Yang--Baxter-integrable class.
For cold-atom and related quantum-simulation realizations of one-dimensional spin chains, observing energy-current conservation provides a direct criterion for determining whether the implemented Hamiltonian lies on an integrable locus.}
\label{f:heat-current-equivalence}
\end{figure*}

This theorem also reveals a surprising rigidity underlying Yang--Baxter integrability: the lowest-order nontrivial relation already forces the all-order relations.
This unexpected collapse suggests that the Yang--Baxter equation is highly redundant and that its essential content is far simpler than its apparent complexity.

Because the Reshetikhin condition is equivalent to conservation of the total energy current~\cite{zotos1997transport, klumper2002thermal}, our theorem also has a direct experimental implication. Exact solvability is usually regarded as a purely mathematical property, beyond the reach of experimental verification. 
Our result challenges this view by showing a striking connection between mathematics and experiment: Yang--Baxter solvability can be diagnosed by measuring a single physical observable—the total energy current.

\section*{Proving Yang--Baxter integrability from a Hamiltonian-level condition}
To outline the proof of our main theorem, we first briefly explain another route to local conserved quantities in the Yang--Baxter framework. 
Suppose that a Hamiltonian can be obtained from a regular $R$-matrix satisfying the Yang--Baxter equation.
We introduce the {\it boost operator} $B:=\sum_j j h_{j,j+1}$, which recursively generates $n$-local conserved quantities in a bottom-up fashion as $\QB{n+1}:=[B,\QB{n}]$ with $\QB{2}=H$.
Assuming the Yang--Baxter equation, $\QB{n}$ is shown to coincide with the logarithmic-derivative charge $Q_n$ defined in \eref{QT}, which follows from the relation $[B,Q_m]=Q_{m+1}$.
We note that, even without assuming the Yang--Baxter equation, it has been shown that the Reshetikhin condition---or equivalently $[\QB{3}, H]=0$---implies $[\QB{n}, H]=0$ for all $n$~\cite{Hokkyo2026SingleConservation}.

To prove the main theorem, we follow the standard arguments in the Yang--Baxter framework, but in a setting where the Yang--Baxter equation has not yet been established. 
To this end, we start from a candidate $R$-matrix of the form \eref{R-expand}, without assuming that it satisfies the Yang--Baxter equation. 
We write the discrepancy between the two sides of the Yang--Baxter equation as correction terms:
\balign{
&R_{12}(u)R_{13}(u+v)R_{23}(v)-R_{23}(v)R_{13}(u+v)R_{12}(u) \nt \\
&=\sum_{k,l}F^{k,l}_{123}u^kv^l.\lb{YBE-c}
}
The Reshetikhin condition allows us to set $F^{2,1}_{123}=0$ by a suitable choice of $\cR^{(3)}$.
Our goal is to prove that there exists a suitable ${\cR}^{(n)}$'s such that these correction terms $F^{k,l}_{123}$ can be eliminated order by order. 
In this way, the usual Yang--Baxter argument is turned into a bootstrap proof of the Yang--Baxter equation itself.

\begin{figure*}[t]
\centering
\includegraphics[width=0.95\textwidth]{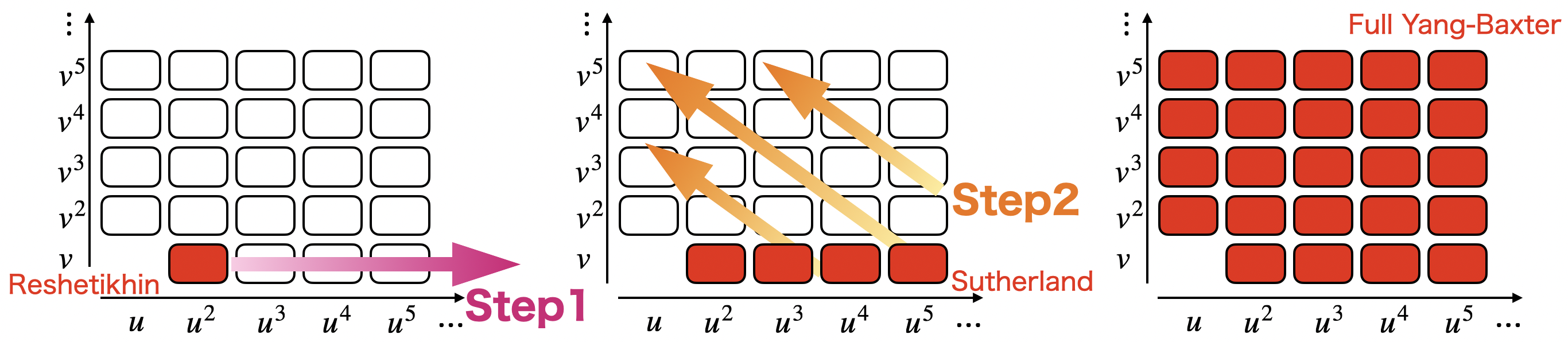}
\caption{\textbf{From the Reshetikhin condition to the full Yang--Baxter equation.}
This figure locates the Reshetikhin condition within the spectral-parameter expansion of the Yang--Baxter equation and summarizes the proof strategy.
Expanding the Yang--Baxter equation in two spectral parameters produces a two-dimensional array of order-by-order constraints.
Some of the lowest-order constraints are automatic, and the first nontrivial Hamiltonian-level constraint is the Reshetikhin condition, marked in the figure.
Step 1 shows that this single condition is sufficient to continue the one-parameter recursive construction to all orders, eliminating the constraints along the first nontrivial line.
Step 2 shows that the remaining two-parameter constraints are not independent: constraints with the same total degree are linked by a higher consistency relation.
As a result, the vanishing along the first line propagates across each diagonal of fixed total degree, leaving no higher Yang--Baxter obstruction.
Thus the low-order Reshetikhin condition forces the full Yang--Baxter equation.}
\label{f:proof-outline}
\end{figure*}

We show this in the following two steps.
In the first step, we recursively show that a suitable ${\cR}^{(n)}$'s make $F^{k,1}_{123}=0$ for all $k$.
The induction step proceeds as follows.
Assume that $F^{k,1}=0$ for all $k\leq n-2$.
We first show $\QT{n}=\QB{n}$ through a careful computation of the correction terms.
We then show that $\sum_{i=1}^L \Pi_{i-1,i+1} F^{n-1,1}_{i-1,i,i+1}=0$ by expanding $[T(u),H]$ up to order $u^{n-1}$ in two different ways.
The first computation follows a telescopic sum of the {\it Sutherland equation}, which is a derivative of the Yang--Baxter equation, with the correction terms kept.
The second uses the series expansion $T(u)\sim \exp \( u\QT{2}+\frac{u^2}{2}\QT{3}+\cdots\)$ up to the global translation operator, which follows from \eref{QT}.
A comparison of the two expressions yields $\sum_{i=1}^L \Pi_{i-1,i+1}F^{n-1,1}_{i-1,i,i+1}=0$, which implies $\Pi_{13}F^{n-1,1}_{123}=X_{12}-X_{23}$.
The telescopic term can be absorbed by a suitable choice of ${\cR}^{(n)}$, so that $F^{n-1,1}_{123}=0$.

In the second step, we show that all coefficients $F^{k,l}_{123}$ with $k+l=n$ are proportional to a common operator. 
This shows, in particular, that $F^{n-1,1}_{123}=0$ for all $n$ implies $F^{k,l}_{123}=0$ for all $k,l$, and hence that the Yang--Baxter equation holds without correction terms.

Let $\omega_n(u,v):=\sum_{k+l=n}F^{k,l}_{123}u^kv^l$ be the homogeneous correction term of degree $n$ in the modified Yang--Baxter equation \eqref{YBE-c}. 
A careful comparison of the two sides of the four-particle factorization identity with correction terms yields the cocyclic condition: $\omega_n(u,v)+\omega_n(u+v,w)=\omega_n(v,w)+\omega_n(u,v+w)$.
Using the equivalence between the cocyclic condition and the coboundary condition, i.e., $\omega_n(u,v)=g(u+v)-g(u)-g(v)$ with a suitable $g(x)$, we finally get $\omega_n(u,v)=((u+v)^{n}-u^{n}-v^{n})K$.
Thus, all coefficients $F^{k,l}_{123}$ with $k+l=n$ and $k,l\geq 1$ are proportional to the same operator $K$.

\section*{Detecting integrability and reconstructing R-matrices}

Our theorem provides a practical advantage both in the search for integrable systems and in the reconstruction of the corresponding $R$-matrices. 
For the search problem, the advantage is direct: one can work in the space of local Hamiltonians, rather than in the much larger space of parametrized $R$-matrices. 
The theorem also gives a constructive route to the $R$-matrix itself. 
Indeed, the proof can be read as an order-by-order algorithm for determining the coefficients $\check R^{(n)}$ in the expansion \eqref{R-expand}. 
At each order, the unknown two-site coefficient is fixed by matching a difference of the form $\check R^{(n+1)}_{23}-\check R^{(n+1)}_{12}$ to an expression determined by lower-order coefficients. 
Our theorem guarantees that, once the Reshetikhin condition is
satisfied, this matching problem has a solution at every order, and
that the resulting series converges and satisfies the full
Yang--Baxter equation.

The computational cost of this procedure is modest. 
This recursion process requires storing only two-site coefficients in the memory and evaluating the components needed to determine the next two-site operator. 
This reduction is what makes the Hamiltonian-level search substantially efficient compared with a direct search over $R$-matrices. 
The resulting time and memory costs are summarized in Table~\ref{tab:complexity}. 
Detailed algorithms and the implementation are provided in the Supplementary Information and in the accompanying code package.

\begin{table}[t]
\caption{Computational complexity of the proposed algorithm.}
\label{tab:complexity}
\centering
\begin{tabular}{lcc}
\hline
Task & Time & Memory \\
\hline
Reshetikhin test & $O(d^8)$ & $O(d^6)$ \\
$R$-matrix reconstruction & $O(n^2 d^6)$ & $O(n d^4)$ \\
\hline
\end{tabular}
\begin{flushleft}
\footnotesize
Here $d=\dim V$ is the dimension of the local Hilbert space, and $n$ is the reconstruction order of the $R$-matrix. The estimates assume standard cubic matrix multiplication and a generic dense two-site Hamiltonian density $h\in \mathrm{End}(V\otimes V)$.
\end{flushleft}
\end{table}

\section*{Toward refinement of quantum integrability}
The definition of quantum integrability has long been elusive, unlike the case of classical Hamiltonian integrability~\cite{weigert1992problem, caux2011remarks}. In classical systems, the Liouville--Arnold theorem establishes an equivalence between the existence of sufficiently many independent, mutually commuting conserved quantities and solvability in terms of action-angle variables. 
For quantum systems, however, no comparable equivalence has been established, leaving the notion of integrability without an equally firm foundation.
The Yang--Baxter equation provides a mathematical characterization of solvability, while infinitely many local conserved quantities provide a physical characterization. This gap between mathematical and physical characterizations has been one reason why the definition of quantum integrability has remained subtle.

\begin{figure*}[t]
\centering
\includegraphics[width=0.69\textwidth]{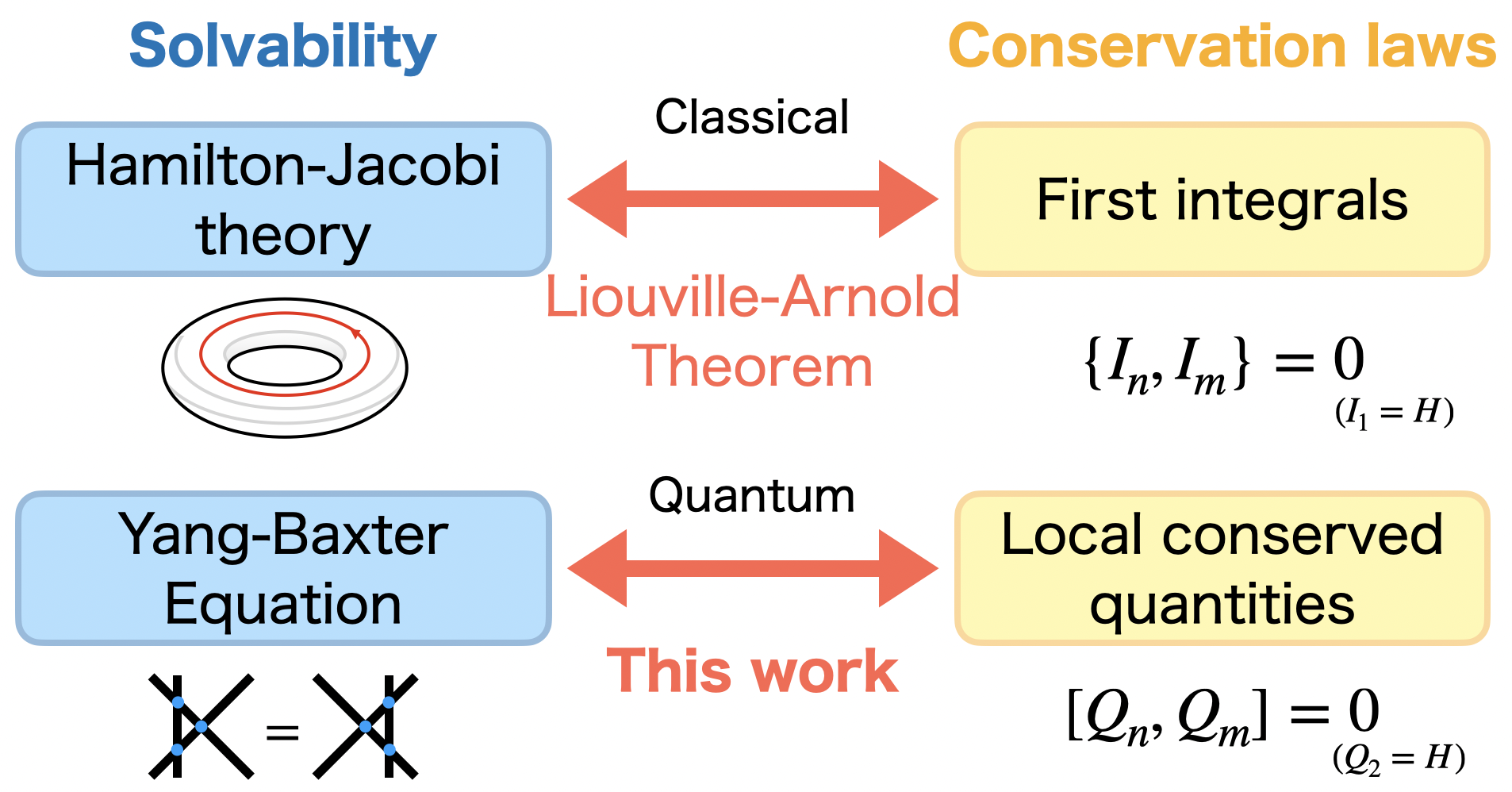}
\caption{\textbf{A quantum counterpart of the Liouville--Arnold theorem.}
In classical Hamiltonian mechanics, the Liouville--Arnold theorem establishes an equivalence between conserved quantities and solvability: for a system with $n$ degrees of freedom, the existence of $n$ independent mutually commuting conserved quantities is equivalent to solvability in the Hamilton--Jacobi sense.
In quantum systems, an analogous equivalence between local conserved quantities and Yang--Baxter solvability has long been missing.
The present work establishes such a correspondence in isotropic nearest-neighbour spin chains.
The known conservation-law structure in the isotropic class and the theorem proved here together show that a model has only two possibilities: either it has no nontrivial local conserved quantity and no Yang--Baxter structure, or it has infinitely many local conserved quantities together with Yang--Baxter solvability.
Thus, within this class, the appearance of even a single nontrivial local conserved quantity forces the full hierarchy and identifies the model as Yang--Baxter solvable.
We expect this quantum Liouville--Arnold-type picture to extend to broader classes of quantum systems.}
\label{f:quantum-liouville-arnold}
\end{figure*}

Our result establishes the desired correspondence in the standard setting of translation-invariant nearest-neighbour spin chains. 
More precisely, the existence of a regular solution of the difference-form Yang--Baxter equation is equivalent to the conservation of the hierarchy of local quantities generated by the standard boost operator. 
This shows that Yang--Baxter solvability and the conservation hierarchy are two manifestations of the same integrable structure, although the present correspondence is restricted to this particular boost-generated hierarchy.

This correspondence becomes particularly sharp when combined with existing classification results for isotropic nearest-neighbour spin chains~\cite{ShiraishiYamaguchi2026Dichotomy}. 
In this class, local conservation laws exhibit an all-or-nothing structure: a model possesses either infinitely many nontrivial local conserved quantities or none at all. 
Moreover, the former case occurs exactly when the Reshetikhin condition is satisfied. 
Together with the present result, this implies that, within this class, Yang--Baxter solvability is equivalent to the existence of infinitely many local conserved quantities, and even to the existence of a single nontrivial local conserved quantity. 
In this sense, our result provides a quantum counterpart of the Liouville--Arnold theorem for this class of spin chains.

A complete quantum counterpart of the Liouville--Arnold theorem will require extending this correspondence beyond the difference-form setting. 
Models solvable by non-difference-form Yang--Baxter equations need not satisfy the Reshetikhin condition, and their local conserved quantities are generated not by the standard boost operator but by a generalized one~\cite{links2001ladder, deLeeuwEtAl2021Boost}. 
In this framework, one generally obtains a one-parameter family of integrable Hamiltonians, with the Hamiltonian of interest appearing as a particular member of that family. 
Extending our result to this broader setting is challenging, but would be an important step towards a general characterization of quantum integrability.

\newpage

\section*{Methods}

Here we briefly review some useful properties of Yang--Baxter integrable models and outline the proof of our main theorem.

\subsection*{Local conserved quantities from the Yang--Baxter equation}
We first explain how the Yang--Baxter equation \eqref{YBE} provides an infinite family of commuting local quantities.
Introducing a single-site auxiliary system $a$, we construct a transfer matrix $T(u)$ on $L$ spins as 
\eq{
T(u):=\Tr_a[\prod_{i=1}^L R_{ai}(u)]=\Tr_a[R_{aL}(u)\cdots R_{a2}(u)R_{a1}(u)].
}
After inserting $R_{ab}(u-v)R_{ab}^{-1}(u-v)$, we repeatedly apply the Yang--Baxter equation 
$R_{ab}(u-v)R_{ai}(u)R_{bi}(v)=R_{bi}(v)R_{ai}(u)R_{ab}(u-v)$.
This yields the commutativity of transfer matrices,
\eq{
[T(u), T(v)]=0
}
for sufficiently small $u$ and $v$.
Then, a commuting family of $n$-local quantities $Q_n$ is provided by
\eqa{
Q_n=\ft{\frac{d^{n-1}}{du^{n-1}}\ln T(u)}{u=0}.
}{QT-m}
In particular, we identify $Q_2$ with $H$.
Since $[\ln T(u), \ln T(v)]=0$ in this neighborhood, the coefficients of $u^{n-1}$ in $\ln T(u)$ and $v^{m-1}$ in $\ln T(v)$ commute with each other, meaning
\eq{
[Q_n, Q_m]=0
}
for any $n$ and $m$.

\subsection*{Boost operator}
Consider a Hamiltonian obtained from a regular $R$-matrix satisfying the Yang--Baxter equation.
To construct local conserved quantities in a bottom-up fashion, we introduce the {\it boost operator} 
\eqa{
B:=\sum_j j h_{j,j+1}.
}{boost-def}
Using this boost operator, we recursively construct the local quantities $\QB{n}$ as
\eqa{
\QB{n+1}:=[B,\QB{n}], \hspace{10pt} \QB{2}=H.
}{QB-def}

Assuming the Yang--Baxter equation, the boost-generated quantities $\QB{n}$ coincide with the logarithmic-derivative charges $Q_n$ defined in \eref{QT-m}.
This equivalence can be shown by using the {\it Sutherland equation}
\eq{
&\( \frac{d}{du}R_{13}(u)\) R_{12}(u) -R_{13}(u)\( \frac{d}{du}R_{12}(u) \)  \nt \\
=&[ R_{13}(u)R_{12}(u), h_{23}]. \label{Sutherland}
}
The Sutherland equation is obtained by differentiating the Yang--Baxter equation \eqref{YBE} with respect to $v$ and then setting $v=0$.
Using the Sutherland equation along the chain, the resulting terms telescope and give
\eq{
[B,T(u)]=\frac{d}{du}T(u),
}
which leads to
\eq{
[B,Q_m]=Q_{m+1}.
}
Starting from $\QB{2}=Q_2=H$, repeated application of this relation yields $\QB{n}=Q_n$ for all $n\geq 2$.

We note that, even without assuming the Yang--Baxter equation, it has been shown that the Reshetikhin condition---or equivalently $[\QB{3}, H]=0$---implies $[\QB{m}, \QB{n}]=0$ for all $m$ and $n$.
This means that the presence of a 3-local conserved quantity given by the boost operator implies an infinite family of local conserved quantities.
\blm{[Hokkyo~\cite{Hokkyo2026SingleConservation}]\lb{t:Hokkyo}
Consider a nearest-neighbor translation-invariant Hamiltonian $H=\sum_i h_{i,i+1}$. Suppose that $H$ is diagonalizable and that $[\QB{3}, H]=0$, where $\QB{3}$ is given by \eref{QB-def}.
Then, $\QB{n}$ satisfies $[\QB{m}, \QB{n}]=0$ for all $m,n \geq 2$.
}

\subsection*{Expansion of $R$-matrix and Reshetikhin condition}
We here examine a series expansion of a regular $R$-matrix.
For this purpose, it is useful to introduce the {\it braided} $R$-matrix $\cR(u):=\Pi R(u)$ with the permutation operator $\Pi$, instead of $R(u)$ itself.
By construction, $\cR(0)=I$ is satisfied.
The Yang--Baxter equation in terms of $\cR$ reads
\eqa{
\cR_{12}(u)\cR_{23}(u+v)\cR_{12}(v)=\cR_{23}(v)\cR_{12}(u+v)\cR_{23}(u).
}{cYBE}

We expand $\cR(u)$ as
\eqa{
\cR(u)=I+hu+\sum_{n=2}^\infty {\cR}^{(n)}u^n.
}{cR-expand}
Plugging this expansion \eqref{cR-expand} into the Yang--Baxter equation \eqref{cYBE} and comparing the coefficients of each $u^kv^l$, we obtain various nontrivial relations on ${\cR}^{(n)}$'s.

Comparison at order $uv$ in \eref{cYBE} gives
\eq{
2\cR^{(2)}_{23}-h^2_{23}=2\cR^{(2)}_{12}-h^2_{12},
}
which can always be satisfied by choosing
\eq{
\cR^{(2)}=\frac12 h^2.
}

Comparison at order $uv^2$ gives
\eq{
\cR^{(3)}_{12}-\cR^{(3)}_{23}+\frac16 [h_{12}+h_{23}, [h_{12}, h_{23}]]-\frac16 h^3_{12}+\frac16 h^3_{23}=0.
}
This equation does not always admit a solution for $\cR^{(3)}$.
It has a solution if and only if the commutator term is written as a telescopic difference,
\eqa{
[h_{12}+h_{23}, [h_{12}, h_{23}]]=X_{23}-X_{12},
}{Reshetikhin-s}
where $X_{ij}$ is a suitable two-site operator acting on sites $i$ and $j$.
This condition is called the {\it Reshetikhin condition}.
If this condition is satisfied, then ${\cR}^{(3)}=(X+h^3)/6$ gives a solution of the relation at order $uv^2$.

\def\a{h_{12}}
\def\b{h_{23}}
\def\d{h_{34}}

In general, the equation obtained from the $uv^n$ coefficient can be shown to admit a solution for $\cR^{(n+1)}$ for all odd $n$.
On the other hand, for even $n$, whether this equation admits a solution for $\cR^{(n+1)}$ is highly nontrivial.
The second nontrivial condition is  obtained from the $uv^4$ coefficient:
\eq{
&[\a^3+\b^3+3(X_{12}+X_{23}), [\a,\b]] \nt \\
&+3(\a[[\a,\b],\a]\a +\b [[\a,\b],\b]\b) \nt \\
&+(\a\b+\b\a)\Di X +\Di X(\a\b+\b\a) \nt \\
&-2(\a\Di X \b+\b\Di X\a) \nt \\
=&Y_{23}-Y_{12}. \lb{R2}
}
Here, we have defined $\Di X:=X_{23}-X_{12}$.
This condition is sometimes called the {\it second integrability test}~\cite{Bibikov2003TaylorExpansion}.

\subsection*{Proof outline of main theorem}
We now outline the proof that, for any local Hamiltonian density $h$ satisfying the Reshetikhin condition \eqref{Reshetikhin-s}, there exists an $R$-matrix of the form \eref{R-expand} satisfying the Yang--Baxter equation \eqref{YBE}.
Consider an $R$-matrix of the form \eref{R-expand} which does not necessarily satisfy the Yang--Baxter equation.
If we compute both sides of the Yang--Baxter equation with this $R$, their difference may be nonzero.
We expand this difference as
\eq{
&R_{12}(u)R_{13}(u+v)R_{23}(v)-R_{23}(v)R_{13}(u+v)R_{12}(u) \nt \\
=&\sum_{k,l}F^{k,l}_{123}u^kv^l. \lb{YBE-c-m}
}
Correspondingly, in terms of the braided $R$-matrix, we write
\eq{
&\cR_{12}(u)\cR_{23}(u+v)\cR_{12}(v)-\cR_{23}(v)\cR_{12}(u+v)\cR_{23}(u) \nt \\
=&\sum_{k,l}\cF^{l,k}_{123}u^lv^k, \lb{cYBE-c-m}
}
where we defined $\cF^{l,k}_{abc}:= -\Pi_{ac}F^{k,l}_{abc}$ for later convenience.
The Reshetikhin condition states that $\cF^{1,2}_{123}=-\Pi_{13}F^{2,1}_{123}$ takes a telescopic form.
This telescopic term can be eliminated by choosing a suitable $\cR^{(3)}$.
Our goal is to prove that there exists a suitable ${\cR}^{(n)}$'s such that all $F^{k,l}_{123}$ vanish.
We shall show this by proving the following two lemmas:

\blm{\lb{t:Reshetikhin-Sutherland}
Suppose that a given Hamiltonian $h$ satisfies the Reshetikhin condition \eqref{Reshetikhin-s} and that $H = \sum_i h_{i,i+1}$ is diagonalizable.
Then, there exists suitable ${\cR}^{(n)}$'s which make $F^{k,1}_{123}=0$ for all $k$.
}

\blm{\lb{t:Sutherland-YBE}
Suppose that $F^{k,l}_{123}=0$ for all $k$ and $l$ with $k+l\leq n-1$.
Then, all $F^{k,l}_{123}$'s with $k+l=n$ and $k,l\geq 1$ are proportional to each other. 
}

\subsubsection*{From Reshetikhin to Sutherland (\lref{Reshetikhin-Sutherland})}
Assume inductively that the coefficients ${\cR}^{(2)}, \dots \cR^{(n-1)}$ have been chosen so that $\cF^{1,k}_{123}=0$ (equivalently, $F^{k,1}_{123}=0$) for $1 \leq k\leq n-2$.
We then show 
\eq{
\cF^{1,n-1}_{123}= Z_{23} - Z_{12}.
}
This telescopic term can be canceled by choosing suitable ${\cR}^{(n)}$ in terms of $Z$, and hence we obtain $\cF^{1,n-1}_{123}=-\Pi_{13}F^{n-1,1}_{123}=0$.

To prove this, we first show the following lemma:
\blm{\lb{t:QT=QB}
Suppose that $F^{k,1}=0$ for all $1 \leq k\leq n-2$,
after suitably choosing $\cR^{(2)}, \dots, \cR^{(n-1)}$.
Then, we have $\QT{m}=\QB{m}$ for $2 \leq m \leq n$.
}

This lemma is shown as follows.
Differentiating the Yang--Baxter equation with correction terms, \eref{YBE-c-m}, with respect to $v$ and then setting $v=0$, we obtain the Sutherland equation with correction terms of order $O(u^{n-1})$:
\eq{
&\( \frac{d}{du}R_{13}(u)\) R_{12}(u) -R_{13}(u)\( \frac{d}{du}R_{12}(u) \)  \nt \\
=&[ R_{13}(u)R_{12}(u), h_{23}]-\sum_{k={n-1}}^\infty  \Pi_{23}F^{k,1}_{123} u^k. \label{Sutherland-c}
}
Here, lower-degree terms are absent because $F^{k,1}=0$ for all $k\leq n-2$ by supposition.
Following the standard derivation of $\QB{n}=\QT{n}$ with the Sutherland equation replaced by the above modified one, we arrive at 
\eq{\lb{com-B-T}
[B,T(u)]=\frac{d}{du}T(u)+O(u^{n-1}).
}
Applying the commutator term by term to a formal power-series
representation of $\log T(u)$, \eref{com-B-T} gives
\eq{
[B,\log T(u)]=\frac{d}{du}\log T(u)+O(u^{n-1}).
}
Using the definition \eref{QT-m} and comparing the coefficients of $u^{m-2}$ for $3\leq m\leq n$,
we obtain \eq{[B,Q_{m-1}]=Q_m.} Since $Q_2=Q_2^B=H$, this
recursively implies
\eq{
Q_m=Q_m^B,\qquad 2\leq m\leq n,
}
and proves \lref{QT=QB}.

We now return to the proof of \lref{Reshetikhin-Sutherland}.
Summing the modified Sutherland equation~\eqref{Sutherland-c} along the chain, the derivative terms cancel telescopically, giving
\eq{
0&=\[\prod_{i=1}^L R_{ai}(u),H\] \nt \\
&-\sum_{i=1}^L\sum_{k=n-1}^{\infty}
\[\prod_{j=i+2}^L R_{aj}(u)\]
\Pi_{i,i+1}F^{k,1}_{a,i,i+1}
\[\prod_{j=1}^{i-1}R_{aj}(u)\]u^k.
}
After taking the trace over the auxiliary space, the leading correction is obtained by setting $k=n-1$ and replacing every remaining $R(u)$ by $R(0)=\Pi$.
Tracking the permutation operators then gives
\eqa{
[T(u),H]
=-S\sum_{i=1}^L\cF_{i-1,i,i+1}^{1,n-1}u^{n-1}+O(u^n).
}{TH-first}

On the other hand, the definition \eref{QT-m} and \lref{QT=QB} yield
\eq{
T(u)
&=S\exp\left(\sum_{m=2}^{n}\frac{u^{m-1}}{(m-1)!}\QT{m}\right)+O(u^n) \nt \\
&=S\exp\left(\sum_{m=2}^{n}\frac{u^{m-1}}{(m-1)!}\QB{m}\right)+O(u^n).
}
All the boost-generated charges $\QB{m}$ commute with $H=\QB{2}$ by \lref{Hokkyo}, and translation invariance gives $[S,H]=0$.
It follows that
\eqa{
[T(u),H]=O(u^n).
}{TH-second}
Comparing \eref{TH-first} and \eref{TH-second}, we obtain
\eq{
\sum_{i=1}^L\cF_{i-1,i,i+1}^{1,n-1}=0.
}
The standard local telescoping argument then implies the existence of a two-site operator $Z$ such that
\eq{
\cF_{123}^{1,n-1}=Z_{23}-Z_{12},
}
which is the required telescopic form and completes the proof of \lref{Reshetikhin-Sutherland}.

\subsubsection*{From Sutherland to Yang--Baxter (\lref{Sutherland-YBE})}
We next show that all $\cF^{l,k}_{123}$ with $l+k=n$ are proportional to each other.
This confirms that $\cF^{1,n-1}_{123}=0$ for all $n$ (\lref{Reshetikhin-Sutherland}) implies $\cF^{l,k}_{123}= -\Pi_{13} F_{123}^{k,l} = 0$ for all $k$ and $l$, meaning the Yang--Baxter equation without any correction term.

Let 
\eqa{
\omega^{n}(u,v):=\sum_{k+l=n}\cF^{l,k}_{123}u^lv^k
}{omega-def}
be the sum of the $n$-th order correction terms in the modified Yang--Baxter equation \eqref{cYBE-c-m}.
With the notation $A=\cR_{12}$, $B=\cR_{23}$, and $C=\cR_{34}$, repeated use of the Yang--Baxter equation (without any correction), together with the far-commutativity relation $A(x)C(y)=C(y)A(x)$, gives the four-particle factorization identity
\eq{
&A(u)B(u+v)A(v)C(u+v+w)B(v+w)A(w) \nt \\
=&C(w)B(v+w)A(u+v+w)C(v)B(u+v)C(u).
}
Importantly, there are two distinct ways to prove this identity by successive applications of the Yang--Baxter equation. These two sequences correspond to the two sides of the {\it Zamolodchikov tetrahedron equation}, which expresses the consistency of different orders of applying the Yang--Baxter equation~\cite{Zamolodchikov1980,kapranov19942}. When correction terms are included, comparison of the two sequences yields the cocycle condition:
\eq{
\omega^n(u,v)+\omega^n(u+v,w)=\omega^n(v,w)+\omega^n(u,v+w).
}

A standard result applicable in the present setting, where $u,v,w$ are scalar parameters, states that this cocycle is a coboundary. Thus, there exists a function $g$ such that
\eq{
\omega^n(u,v)=g(u+v)-g(u)-g(v).
}
Expand $g$ as a formal power series with operator-valued coefficients,
$
g(x)=\sum_{m=0}^{\infty}K_m x^m.
$
The contribution of $K_m$ to the coboundary is homogeneous of total degree $m$ in $u$ and $v$.
Since $\omega^n(u,v)$ is homogeneous of total degree $n$, comparison of the degree-$n$ components gives
\eq{
\omega^n(u,v)=K_n\left((u+v)^n-u^n-v^n\right).
}
For every $k,l\geq 1$ with $k+l=n$, the coefficient of $u^lv^k$ is a nonzero scalar multiple of the same operator $K_n$.
Hence all the coefficients $\cF^{l,k}_{123}$ with $l+k=n$ are mutually proportional and vanish simultaneously. This proves \lref{Sutherland-YBE}.

\paragraph*{Data availability}
No datasets were generated or analysed during the current study.

\paragraph*{Code availability}
A Python reference implementation of the algorithms described in this work
is available at \url{https://github.com/sanatanim/reshetikhin}.
Version 1.0.0 is archived on Zenodo at
\url{https://doi.org/10.5281/zenodo.21721878}.

\paragraph*{Acknowledgements}
We thank Hosho Katsura for valuable comments.
We also thank Atsuo Kuniba, Yuuya Chiba, and Akihiro Hokkyo for fruitful discussions.
This work was assisted by GPT-5.4 Pro through ChatGPT (OpenAI).
This work was supported by JSPS KAKENHI Grant Number JP25KJ0815, Grant-in-Aid for Early-Career Scientists 26K17048, Grant-in-Aid for Transformative Research Areas (B) 26K00021, and JST ERATO Grant Number JPMJER2302.

\paragraph*{Author contributions}
M.S. and N.S. contributed equally to this work and are listed in alphabetical order.
N.S. worked on the proof of \lref{Reshetikhin-Sutherland}, and
M.S. worked on the proof of \lref{Sutherland-YBE}.
F.I. and M.S. developed the code.
All authors contributed to the design of the project and preparation of the manuscript.

\paragraph*{Competing interests}
The authors declare no competing interests.

\paragraph*{Additional information}
Supplementary information is available for this paper.

\bibliography{ref}

\newpage

\onecolumngrid


\newcommand{\vo}{\upsilon}
\newcommand{\midskip}{\vspace{3pt}}

\setcounter{equation}{0}
\def\theequation{S.\arabic{equation}}

\setcounter{thm}{0}
\def\thethm{S.\arabic{thm}}

\makeatletter
\def\shorttableofcontents#1#2{\bgroup\c@tocdepth=#2\@restonecolfalse
  \settowidth\js@tocl@width{\headfont\prechaptername\postchaptername}%
  \settowidth\@tempdima{\headfont\appendixname}%
  \ifdim\js@tocl@width<\@tempdima \setlength\js@tocl@width{\@tempdima}\fi
  \ifdim\js@tocl@width<5zw \divide\js@tocl@width by 5 \advance\js@tocl@width 4zw\fi
\if@tightshtoc
    \parsep\z@
  \fi
  \if@twocolumn\@restonecoltrue\onecolumn\fi
  \@ifundefined{chapter}%
  {\section*{{#1}
        \@mkboth{\uppercase{#1}}{\uppercase{#1}}}}%
  {\chapter*{{#1}
        \@mkboth{\uppercase{#1}}{\uppercase{#1}}}}%
  \@startshorttoc{toc}\if@restonecol\twocolumn\fi\egroup}
\makeatother

\newpage

\setcounter{page}{1}

\begin{center}
{\large \bf Supplemental Material for  \protect \\ 
  ``\titlename'' }\\
\vspace*{0.3cm}
Mizuki Sanatani, Naoto Shiraishi, and Fuga Ishii \\
\vspace*{0.1cm}

{Graduate School of Arts and Sciences / College of Arts and Sciences, The University of Tokyo}

\end{center}

\section{Review of Yang--Baxter integrable models}

In this section, we briefly review basic results on the Yang--Baxter equation and integrable models.

\subsection{Yang--Baxter equation and the quantum inverse scattering method}\lb{s:YBE}

In the standard approach to integrable systems, we first construct a good $R$-matrix satisfying the Yang--Baxter equation.
This $R$-matrix induces an infinite family of commuting local quantities, and we assign the 2-local quantity in this family to the integrable Hamiltonian $H$.
Below we explain this standard approach.

Consider a one-parameter family of matrices $R_{ij}(u)$ on two sites $i$ and $j$ called {\it $R$-matrix}.
If $R(0)$ is the swap operator $\Pi$, we call this $R$-matrix {\it regular}.
The parameter $u$ is called the {\it spectral parameter}.
The Yang--Baxter equation is expressed as
\eqa{
R_{12}(u)R_{13}(u+v)R_{23}(v)=R_{23}(v)R_{13}(u+v)R_{12}(u).
}{YB-s}
Note that in a more general context, one considers a two-parameter family of $R$-matrices $R_{ij}(a,b)$ and the non-difference-form Yang--Baxter equation
\eqa{
R_{12}(a,b)R_{13}(a,c)R_{23}(b,c)=R_{23}(b,c)R_{13}(a,c)R_{12}(a,b).
}{nondif-YB}
To distinguish from the above one, \eref{YB-s} is also called the difference-form Yang--Baxter equation.
Throughout this Supplemental Material, we consider a one-parameter family of regular $R$-matrices and the difference-form Yang--Baxter equation unless otherwise explicitly noted.

If there exists an $R$-matrix satisfying the Yang--Baxter equation, we can construct an infinite family of commuting local quantities.
Consider a one-dimensional system with $L$ sites.
To construct a transfer matrix, we introduce an auxiliary site $a$ whose Hilbert-space dimension is the same as that of a single site of the system.
Using the $R$-matrix, we construct the monodromy matrix $M(u)$ on sites $1,2,\ldots ,L$ and $a$, and the transfer matrix $T(u)$ on sites $1,2,\ldots ,L$ as
\balign{
M(u )&:=\prod_{i=1}^L R_{ai}(u)=R_{aL}(u)\cdots R_{a2}(u)R_{a1}(u) , \\
T(u )&:=\Tr _a[M(u)].
}
Here and in what follows, when the product symbol $\prod_{i=1}^L$ is applied to $R$ matrices or related operators, it is understood to be ordered with increasing indices from right to left.
Using the regularity assumption $R(0)=\Pi$ and a simple relation $\Tr_j [\Pi_{ij}]=I_i$, the transfer matrix at $u=0$ is equal to the shift operator $S$ which maps $1\to 2$, $2\to 3$, $3\to 4 \ldots , L\to 1$:
\eq{
T(0)=\Tr_a \[\prod_{i=1}^L \Pi_{ai} \]=\Tr_a[\Pi_{aL}\cdots \Pi_{a2}\Pi_{a1}]=S.
}
We note that the shift operator can be expressed as
\eq{
S=\Pi_{1,2}\Pi_{2,3}\cdots \Pi_{L-1,L}.
}

A key property of the transfer matrix is its commutativity at different spectral parameters, i.e., for sufficiently small $u$ and $v$ we have
\balign{
[T(u), T(v)]&=0. \lb {prop-1}
}
We derive this relation by inserting $I=R^{-1}_{ab}(u-v)R_{ab}(u-v)$ and applying the Yang--Baxter equation \eqref{YB-s} repeatedly.
The first step in this procedure is
\balign{
\Tr_{a,b}[\prod_i R_{ai}(u)R_{bi}(v)]
=&\Tr_{a,b}[R^{-1}_{ab}(u-v)R_{ab}(u-v)R_{aL}(u)R_{bL}(v)\cdots R_{a2}(u)R_{b2}(v)R_{a1}(u)R_{b1}(v)] \nt \\
=&\Tr_{a,b}[R^{-1}_{ab}(u-v)R_{bL}(v)R_{aL}(u)R_{ab}(u-v)\cdots R_{a2}(u)R_{b2}(v)R_{a1}(u)R_{b1}(v)], \nt \\
\intertext{
where we have used the Yang--Baxter equation in the form of $R_{ab}(u-v)R_{aL}(u)R_{bL}(v)=R_{bL}(v)R_{aL}(u)R_{ab}(u-v)$.
By repeating this procedure, we obtain
}
=&\Tr_{a,b}[R^{-1}_{ab}(u-v)R_{bL}(v)R_{aL}(u)R_{b,L-1}(v)R_{a,L-1}(u)R_{ab}(u-v)\cdots R_{a1}(u)R_{b1}(v)] \nt \\
\vdots \ & \nt \\
=&\Tr_{a,b}[R^{-1}_{ab}(u-v)R_{bL}(v)R_{aL}(u)\cdots R_{b2}(v)R_{a2}(u)R_{b1}(v)R_{a1}(u)R_{ab}(u-v)] \nt \\
=&\Tr_{a,b}[R_{bL}(v)R_{aL}(u)\cdots R_{b2}(v)R_{a2}(u)R_{b1}(v)R_{a1}(u)R_{ab}(u-v)R^{-1}_{ab}(u-v)] \nt \\
=&\Tr_{a,b}[\prod_i R_{bi}(v)R_{ai}(u)],
}
which implies 
\eq{
T(u) T(v)=\Tr_{a,b}[\prod_i R_{ai}(u)R_{bi}(v)]=\Tr_{a,b}[\prod_i R_{bi}(v)R_{ai}(u)]=T(v)T(u).
}
In the last step, we have used the cyclicity of the partial trace.

It follows from \eref{prop-1} that 
\eqa{
[f(T(u)), f(T(v))]=0
}{ff-commute}
for any $f$.
Expanding $f(T(u))$ as
\eq{
f(T(u))=\sum_{n=0}^\infty G_n u^n
}
and comparing the coefficient of $u^n v^m$ in \eref{ff-commute}, we find 
\eq{
[G_n, G_m]=0.
}
In the quantum inverse scattering method, we set $f(x)=\ln x$.
Then, $G_{n-1}$ serves as a conserved quantity, which is computed simply by differentiating $\ln T(u)$ as
\eqa{
Q_n=(n-1)!G_{n-1}=\ft{\frac{d^{n-1}}{du ^{n-1}}\ln T(u)}{u=0}.
}{QT-s}
In particular, we regard $Q_2$ as the Hamiltonian $H$ and the obtained $Q_n$ as the $n$-local conserved quantity of $H$.
The conserved quantity $Q_n$ is indeed $n$-local for a wide class of $R$-matrices, which is confirmed by the boost operator argument presented in \sref{boost}.

\bigskip

We finally make two technical remarks.
The first concerns the normalization of the transfer matrix at $u=0$ by inserting the shift operator $S$.
Since $T(0)=S$, it is natural to insert the inverse shift operator $S^{-1}$ and consider $\ln S^{-1}T(u)$ rather than $\ln T(u)$ so that
\eq{
\ln S^{-1}T(0)=0.
}
This normalization does not affect the local conserved quantities obtained from logarithmic derivatives, since the translation invariance of $T(u)$, namely $[S, T(u)]=0$, implies
\eq{
\ln T(u)=\ln S+ \ln (S^{-1}T(u)).
}
The first term is independent of $u$, and therefore
\eq{
\ft{\frac{d^{n-1}}{du ^{n-1}}\ln T(u)}{u=0}=\ft{\frac{d^{n-1}}{du ^{n-1}}\ln S^{-1}T(u)}{u=0}
}
holds for all $n\geq 2$.

Thus we can use whichever form is more convenient.

The second remark concerns the normalization of each local conserved quantity.
If $Q_n$ and $Q_m$ commute, $aQ_n$ and $bQ_m$ also commute for arbitrary constants $a$ and $b$, implying that the normalization of each local conserved quantity is arbitrary.
For physical quantum many-body systems, physical conserved quantities should be Hermitian.
If $Q_2=\Pi R$ (i.e., Hamiltonian) is Hermitian, 
the charges $Q_n$ defined in \eref{QT-s} are Hermitian for even $n$ and anti-Hermitian for odd $n$.\footnote{
\mbox{This fact follows from the relation with the boost operator: $(\QB{n+1})^\dagger=([B,\QB{n}])^\dagger=[(\QB{n})^\dagger,B]$.}
} 
Therefore, for odd $n$, we should multiply $Q_n$ by the imaginary unit in order to obtain a Hermitian conserved quantity.
We define physical conserved quantities $\pQ{n}$ as
\eq{
\pQ{n}:=-(i)^nQ_n=-(i)^n\ft{\frac{d^{n-1}}{du ^{n-1}}\ln T(u)}{u=0}.
}
Here, the remaining overall sign or constant factor is a matter of convention, as discussed above.
Some works use the above Hermitian normalization $\pQ{n}$ for the local conserved quantities obtained from the Yang--Baxter equation, or equivalently write them as
\eq{
\pQ{n}=-i\ft{\frac{d^{n-1}}{du ^{n-1}}\ln T(iu)}{u=0}.
}

\subsection{Simple example: $R(u)=\Pi +uI$}

To illustrate how the quantum inverse scattering method works, we present here the simplest nontrivial example:
\eq{
R(u)=\Pi +uI.
}
As we will see later, this $R$-matrix corresponds to the Heisenberg model for $S=1/2$ and the Uimin--Lai--Sutherland model for $S=1$.
This $R$-matrix satisfies the Yang--Baxter equation \eqref{YB-s}, as can be verified by the following direct calculation:
\eq{
&(\Pi_{12}+uI_{12})(\Pi_{13}+(u+v)I_{13})(\Pi_{23}+vI_{23}) \nt \\
=&\mxs{1&2&3 \\ 3&2&1} +(u+v) \mxs{1&2&3 \\ 2&3&1} +(u+v) \mxs{1&2&3 \\ 3&1&2}+ v(u+v) \Pi_{12}+vu\Pi_{13}+(u+v)u\Pi_{23}+v(u+v)u \nt \\
=&(\Pi_{23}+vI_{23})(\Pi_{13}+(u+v)I_{13})(\Pi_{12}+uI_{12}),
}
where $\mxs{1&2&3 \\ *&*&*}$ denotes the permutation operator on three elements.

With this $R(u)$, the term linear in $u$ in $T(u)$ corresponds to the case where, at one site, the identity operator is inserted instead of the swap with the auxiliary site $a$.
Equivalently, this term is obtained by omitting one swap in the cyclic product.
Thus, defining the translation operator that skips site n by
\eq{
U^{(1)}_n:=\Pi_{1,2}\Pi_{2,3}\cdots \Pi_{n-2,n-1}\Pi_{n-1,n+1}\Pi_{n+1,n+2}\cdots \Pi_{L-1,L},
}
the coefficient of $u$ in $T(u)$ is given by
\eq{
T^{(1)}=T'(0)=\sum_i U^{(1)}_i.
}
Hence, the 2-local conserved quantity $Q_2$ associated with this $R$-matrix reads
\eq{
Q_2=\ft{\frac{d}{du}\ln T(u)}{u=0}=T^{-1}(0)T'(0)=S^{-1}\sum_i U^{(1)}_i=\sum_i \Pi_{i-1,i}.
}
Here, the last equality can be checked by tracking the permutation of sites:
\balign{
1,2,\cdots, &n-2,n-1,n,n+1, n+2 \cdots ,L &\nt \\
\to L, 1, \cdots, &n-3,n-2, n, n-1, n+1 \cdots , L-1 &(\text{applying} \ U^{(1)}_n) \nt \\
\to 1,2,\cdots, &n-2,n,n-1,n+1, n+2 \cdots ,L,  &(\text{applying} \ S^{-1}) \nt
}
which shows $S^{-1}U^{(1)}_n=\Pi_{n,n+1}$.

We express the Hamiltonian $H=\pQ{2}=Q_2=\sum_i \Pi_{i,i+1}$ in terms of spin operators.
In the case of $S=1/2$, the Hamiltonian is expressed as
\eq{
H=\sum_i \Pi_{i,i+1}= \sum_i  2\bsS_i \cdot \bsS_{i+1} +\frac 12I_{i,i+1} ,
}
which is the Heisenberg model up to an overall normalization and an additive constant.
In the case of $S=1$, the Hamiltonian is expressed as
\eq{
H=\sum_i \Pi_{i,i+1}=\sum_i \bsS_i\cdot \bsS_{i+1} + (\bsS_i\cdot \bsS_{i+1})^2-I_{i,i+1},
}
which is the Uimin--Lai--Sutherland model up to an additive constant.

\bigskip

We next compute $Q_3$ associated with this $R$-matrix, which is given by
\eqa{
Q_3=\ft{\frac{d^2}{du^2}\ln T(u)}{u=0}=T^{-1}(0)T''(0)-T^{-2}(0){T'}^2(0).
}{Q3-ex}
Here $T''(0)$ is the second derivative of $T$ which consists of terms where the identity operator is inserted at two sites, while the permutation operator acts between each of the remaining sites and the auxiliary site.
To handle these terms, we define
\balign{
U^{(11)}_{n,m}:=&\Pi_{1,2}\Pi_{2,3}\cdots \Pi_{n-2,n-1}\Pi_{n-1,n+1}\Pi_{n+1,n+2}\cdots \Pi_{m-2,m-1}\Pi_{m-1,m+1}\Pi_{m+1,m+2}\cdots \Pi_{L-1,L} &(n\neq m, \ m\pm 1) \\
U^{(2)}_{n}:=&\Pi_{1,2}\Pi_{2,3}\cdots \Pi_{n-2,n-1}\Pi_{n-1,n+2}\Pi_{n+2,n+3}\cdots \Pi_{L-1,L}.
}
After multiplying these operators by $S^{-1}$ from the left, they become
\balign{
S^{-1}U^{(11)}_{n,m}=&\Pi_{n-1,n}\Pi_{m-1,m}, \\
S^{-1}U^{(2)}_n=&\mxs{n-1&n&n+1 \\ n&n+1&n-1}.
}
Using these symbols, the first term in \eref{Q3-ex} is calculated as
\eq{
T^{-1}(0)T''(0)=\sum_{n,m (n\neq m,m\pm 1)}S^{-1}U^{(11)}_{n,m}+2\sum_n S^{-1}U^{(2)}_n= \sum_{n,m (n\neq m,m\pm 1)}\Pi_{n,n+1}\Pi_{m,m+1}+2\sum_n \mxs{n-1&n&n+1 \\ n&n+1&n-1}.
}

We next calculate the second term of \eref{Q3-ex}:
\eq{
T^{-2}(0){T'}^2(0)=S^{-2}\( \sum_i U^{(1)}_i\) \( \sum_j U^{(1)}_j\) .
}
The product $U^{(1)}_iU^{(1)}_j$ takes different forms depending on the relative positions of $i$ and $j$. 
We therefore distinguish the following cases:
\eq{
S^{-2}U^{(1)}_iU^{(1)}_j=\begin{cases}
S^{-1}U^{(11)}_{i-1,j}=\Pi_{i-2,i-1}\Pi_{j-1,j} &i\neq j, j+1, j+2 \\
S^{-2}(U^{(1)}_j)^2=\mxs{j-2& j-1& j \\ j&j-2&j-1} & i=j \\
S^{-2}U^{(1)}_{j+1}U^{(1)}_j=I &i=j+1 \\
S^{-2}U^{(1)}_{j+2}U^{(1)}_j=\mxs{j-1& j& j+1\\ j& j+1&j-1} & i=j+2
\end{cases}
}
Combining them, we arrive at
\eq{
Q_3=T^{-1}(0)T''(0)-T^{-2}(0){T'}^2(0)=\sum_j \[ \mxs{j-1& j& j+1\\ j+1& j-1&j}-\mxs{j-1& j& j+1\\ j& j+1&j-1}\] ,
}
where we dropped an additive constant proportional to the identity operator.

In the case of $S=1/2$, the physical 3-local conserved quantity $\pQ{3}$ is expressed in terms of spin operators as
\eq{
\pQ{3}=iQ_3=-4\sum_j \bsS_{j-1}\cdot (\bsS_j \times \bsS_{j+1}),
}
where $\times$ is the cross product.

\subsection{Boost operator}\lb{s:boost}

As seen in the preceding subsections, the local conserved quantities are obtained from the logarithmic derivatives of the transfer matrix.
However, as also seen in the same subsections, their computations are complicated, particularly for large $n$.
We here provide an alternative method to obtain local conserved quantities; the boost operator method.
The boost operator $B$ is defined as
\eq{
B=\sum_j j h_{j,j+1}.
}
Here, following the usual convention~\cite{tetel1982lorentz} we do not care about the effect of boundaries since it is irrelevant to our analysis (see \sref{boost-boundary} on this point).

We first suppose that a Hamiltonian has a corresponding $R$-matrix satisfying the Yang--Baxter equation \eqref{YB-s}.
We differentiate the Yang--Baxter equation \eqref{YB-s} with respect to $v$ and set $v=0$.
Multiplying the resulting equation by $\Pi_{23}$ from the left, we obtain the {\it Sutherland equation}
\eqa{
\( \frac{d}{du}R_{13}(u)\) R_{12}(u) -R_{13}(u)\( \frac{d}{du}R_{12}(u) \)
=[ R_{13}(u)R_{12}(u), h_{23}],
}{Sutherland-s}
where we used $R(0)=\Pi$ and $\ft{\frac{d}{du}R(u)}{u=0}=\Pi h$, which is shown in the next subsection (\sref{Reshetikhin}).
We next relabel the spaces as $1\to a$, $2\to i$, and $3\to i+1$.
Multiplying the resulting equation by $\prod_{j\leq i-1}R_{aj}(u)$ from the right and by $\prod_{j\geq i+2}R_{aj}(u)$ from the left, we have
\eq{
\prod_{j\geq i+2}R_{aj}(u)\( \frac{d}{du}R_{a,i+1}(u)\) \prod_{j\leq i}R_{aj}(u) -\prod_{j\geq i+1}R_{aj}(u) \( \frac{d}{du}R_{ai}(u)\) \prod_{j\leq i-1}R_{aj}(u)=[ \prod_j R_{aj}(u), h_{i,i+1}].
}
Finally, multiplying by $i$ and summing over $i$, we find
\eq{
\frac{d}{du} \prod_j R_{aj}(u)=[B,\prod_j R_{aj}(u)].
}
Here we used the telescoping cancellation.
Ignoring boundary terms and taking its trace over $a$, we arrive at 
\eqa{
[B,T(u)]=\frac{d}{du}T(u).
}{BT-com}

The obtained relation \eqref{BT-com} directly implies
\eqa{
[B,T^m (u)]=\sum_{l=0}^{m-1} T^l(u)[B,T(u)] T^{m-l-1}(u)= \sum_{l=0}^{m-1} T^l(u)\frac{d}{du}T(u) T^{m-l-1}(u) =\frac{d}{du}T^m(u).
}{BTm-com}
Thus, in particular we find
\eqa{
[B,\ln T(u)]=\frac{d}{du}\ln T(u),
}{BTf-com}
which is demonstrated by expanding $\ln T$ with respect to $T$ and applying \eref{BTm-com} to each order.
Using the definition \eref{QT-s}, we obtain the following recursive relation for the local conserved quantities:
\eqa{
[B,Q_m]=\ft{\frac{d^{m-1}}{du^{m-1}} [B,\ln T(u)]}{u=0}=\ft{\frac{d^{m}}{du^{m}} \ln T(u)}{u=0} =Q_{m+1}.
}{B-recursive}
This relation guarantees that $Q_m$ is indeed an $m$-local quantity.

\bigskip

We next consider a general Hamiltonian $H$ for which the existence of the Yang--Baxter equation is not assumed.
In this case, the quantity $Q_m$ obtained by \eref{B-recursive} is no longer proven to be a conserved quantity.
To distinguish the quantities obtained by the boost operator from those obtained by the logarithmic derivatives of the transfer matrix, we denote the former ones by $\QB{m}$:
\eqa{
\QB{m+1}:=[B,\QB{m}], \hspace{15pt} \QB{2}:=H.
}{B-recursive-def}

The first boost-generated charge is
\eq{
\QB{3}=[B,H]=-\sum_i [h_{i,i+1},h_{i+1,i+2}].
}
For a general Hamiltonian, $\QB{3}$ is not necessarily conserved.
The conservation of $\QB{3}$ is expressed as
\eqa{
[[B,H],H]=0.
}{BHH}
This condition is equivalent to the Reshetikhin condition \eqref{Res-s}~\cite{GrabowskiMathieu1995IntegrabilityTest}.

If \eref{BHH} holds, then $\QB{4}$ is also the 4-local conserved quantity, which is proven as
\eq{
[\QB{4},H]=[[B,\QB{3}], H]=[B,[\QB{3},H]]-[\QB{3},[B,H]]=0,
}
where we used the Jacobi identity:
\eq{
[[a,b],c]+[[b,c],a]+[[c,a],b]=0.
}
Similarly, if we have already established $[\QB{k}, \QB{l}]=0$ for all $k+l\leq p$, then for any $n+m=p+1$ we have
\eq{
[\QB{n}, \QB{m}]=[[B,\QB{n-1}], \QB{m}]
= [B,[\QB{n-1},\QB{m}]] - [\QB{n-1},[B,\QB{m}]]
= -[\QB{n-1}, \QB{m+1}].
}
In particular, $\QB{p-1}$ conserves if $p+1$ is even, which is shown as
\eq{
[\QB{p-1}, H]=-[\QB{p-2}, \QB{3}]=\cdots =(-1)^{(p+1)/2}[\QB{(p+1)/2}, \QB{(p+1)/2}]=0.
}

The conservation of $\QB{m}$ with general $m\geq 5$ has recently been demonstrated by Hokkyo.

\blm{[Hokkyo~\cite{Hokkyo2026SingleConservation}(same as \lref{Hokkyo} in the main article)]\lb{t:Hokkyo-s}
Consider a translation-invariant nearest-neighbor interaction Hamiltonian $H=\sum_i h_{i,i+1}$, whose eigenstates span the state space in consideration.
Suppose that $\QB{3}=[B,H]$ is a conserved quantity; $[\QB{3},H]=0$.
Then, $\QB{m}$ given by \eref{B-recursive-def} are translation-invariant and commute with each other; $[\QB{m}, \QB{n}]=0$ for any $m, n\geq 2$.
}

Although the original proof by Hokkyo applies only to Hermitian systems with a finite-dimensional local Hilbert space, we here present a slightly generalized proof.
In the following proof, we require that the set of the eigenstates of the total Hamiltonian $H$ spans the entire state space in consideration.
This covers not only Hermitian spin systems but also bosonic systems with the conservation of the particle number and diagonalizable non-Hermitian systems.

\bpf
We prove that $\QB{m}$ is translation-invariant and $[\QB{m},H]=0$ by induction on $m$.

We first show the translation invariance of $\QB{m+1}$ under the assumption of $[\QB{m}, H]=0$ and $[S, \QB{m}]=0$, where $S$ is the one-site shift operator.
This follows immediately from
\eq{
[S, \QB{m+1}]=[S, [B,\QB{m}]]=[[S,B],Q_m^B]+[B,[S,Q_m^B]]=[-HS,\QB{m}]=0,
}
where we used $[B, S]=HS$ in the third equality.

We next show the conservation of $\QB{m+1}$ under the assumption of $[\QB{m}, H]=0$ and $[S, \QB{m+1}]=0$.
Using the Jacobi identity, we have
\eq{
[H, [H, \QB{m+1}]]=[H, [H, [B,\QB{m}]]]=[H, [[H,B],\QB{m}]]
=-[[H,\QB{3}],\QB{m}] - [\QB{3},[H,\QB{m}]]=0.
}
Here, we notice that $[H, [H, \QB{m+1}]]=0$ implies $[H, \QB{m+1}]=0$, since the action of $[H, \cdot \ ]$ maps nonzero off-diagonal elements with respect to the eigenstates of $H$ onto nonzero ones (i.e., if $\QB{m+1}$ has a nonzero off-diagonal element, then the same off-diagonal element of $[H, \QB{m+1}]$ is still nonzero, implying $[H, [H, \QB{m+1}]]\neq 0$).
To prove this in a more rigorous manner, we employ the energy eigenbasis $\{ \ket{\psi_i}\}$ which spans the entire state space in consideration.
(In the non-Hermitian case, we employ the bases of left and right energy eigenstates $\{ \bra{\psi_i}\}$ and $\{ \ket{\psi_i}\}$, which are not necessarily conjugate to each other.)
Let $\ket{\psi_i}$ and $\ket{\psi_j}$ be energy eigenstates with eigenenergies $E_i$ and $E_j$ ($E_i\neq E_j$).
Then we have
\eq{
\bra{\psi_i}[H, [H, \QB{m+1}]]\ket{\psi_j}=\braket{\psi_i| H^2\QB{m+1}+\QB{m+1}H^2-2H\QB{m+1}H|\psi_j}=(E_i-E_j)^2\braket{\psi_i|\QB{m+1}|\psi_j}.
}
Assuming $[H, [H, \QB{m+1}]]=0$, we find $\braket{\psi_i|\QB{m+1}|\psi_j}=0$ whenever $E_i\neq E_j$.
This implies that $\QB{m+1}$ has no off-diagonal blocks with respect to the energy eigenspaces, meaning $[\QB{m+1},H]=0$.
Hence, assuming $[\QB{3},H]=0$, we recursively obtain $[\QB{m}, H]=0$ for all $m$, which completes the proof.
\epf

Note that Hokkyo also showed a sufficient condition for $\QB{m}$ to be an $m$-local conserved quantity.
We first decompose the local Hamiltonian $h$ into two-site terms $h^{(2)}$ and one-site terms $h^{(1)}$.
Consider a map $f(A):=[A\otimes I, h^{(2)}]$, which maps an operator on a single site onto that on two sites.
Require that the kernel of this map is $\{ cI|c\in \bbC\}$.
In other words, the map $f$ is injective up to a constant multiple of the identity operator.
Then, it is shown that $\QB{m}$ is indeed an $m$-local quantity, i.e., $\QB{m}$ is a shift sum of a quantity whose contiguous support is strictly $m$ sites.

\subsection{Expansion of $R$-matrix and Reshetikhin condition}\lb{s:Reshetikhin}

Consider an $R$-matrix satisfying the Yang--Baxter equation \eqref{YB-s}.
We consider a series expansion of the $R$-matrix in terms of $u$.
Since the zeroth order is the permutation operator due to the regularity, it is useful to introduce an alternate form of $R$-matrix 
\eq{
\cR(u):=\Pi R(u),
}
whose series expansion is given as
\eq{
\cR(u)=I+\sum_{n=1}^\infty \cR^{(n)} u^n.
}
The relation $Q_2=T^{-1}(0)T'(0)=\sum_i \cR^{(1)}_{i,i+1}$ implies
\eq{
\cR^{(1)}=h,
}
where we denote $Q_2=H=\sum_i h_{i,i+1}$.
Thus, the series expansion of $\cR(u)$ reads
\eqa{
\cR(u)=I+u h +u^2 \cR^{(2)}+u^3 \cR^{(3)}+\cdots.
}{modR-expand}

This braided $R$-matrix satisfies a modulated version of the Yang--Baxter equation
\eqa{
\cR_{12}(u)\cR_{23}(u+v)\cR_{12}(v)=\cR_{23}(v)\cR_{12}(u+v)\cR_{23}(u).
}{mod-YB}
We insert \eref{modR-expand} into the Yang--Baxter equation \eqref{mod-YB} and compare the coefficients of $u$ and $v$.

The coefficient of $uv$ in \eref{mod-YB} reads
\eq{
2\cR^{(2)}_{23}-h^2_{23}=2\cR^{(2)}_{12}-h^2_{12},
}
which implies
\eq{
\cR^{(2)}=\frac12 h^2.
}

The coefficients of $uv^2$ reads
\eqa{
\cR^{(3)}_{12}-\cR^{(3)}_{23}+\frac16 [h_{12}+h_{23}, [h_{12}, h_{23}]]-\frac16 h^3_{12}+\frac16 h^3_{23}=0.
}{st2-coefficient}
Defining unknown 2-support operator $X:=6\cR^{(3)}-h^3$, we arrive at the {\it Reshetikhin condition}~\cite{KulishSklyanin1982QuantumSpectral, Kennedy1992IsotropicSpinChains}
\eqa{
[h_{12}+h_{23},[h_{12}, h_{23}]]=X_{23}-X_{12}.
}{Res-s}
Since not every local Hamiltonian $h$ admits a solution to $X$ (i.e., the left-hand side of \eref{Res-s} is not necessarily telescopic), the Reshetikhin condition is thus the lowest-order nontrivial relation of the Yang--Baxter equation.
If a local Hamiltonian $h$ satisfies the Reshetikhin condition, \eref{st2-coefficient} can be solved by choosing
\eq{
\cR^{(3)}=\frac{1}{6}(h^3+X).
}
On the other hand, if a local Hamiltonian $h$ does not satisfy the Reshetikhin condition, then no $\cR^{(3)}$ satisfies the equation \eqref{st2-coefficient}, meaning that this local Hamiltonian $h$ cannot be obtained from an $R$-matrix satisfying the Yang--Baxter equation.
Thus, the Reshetikhin condition \eqref{Res-s} provides a useful test, namely a necessary condition, for Yang--Baxter solvability.

\bigskip

Of course, the Reshetikhin condition only ensures the presence of $\cR^{(3)}$, and thus there may exist a local Hamiltonian $h$ which has a solution to $\cR^{(3)}$ but does not have a solution to a higher coefficient $\cR^{(n)}$.
In this case, this local Hamiltonian $h$ passes the Reshetikhin condition but does not have an $R$-matrix satisfying the Yang--Baxter equation.
It can be shown that $\cR^{(n)}$ for even $n$ always has a solution if all the lower conditions than $n$ are satisfied.
On the other hand, whether $\cR^{(n)}$ for odd $n$ has a solution is highly nontrivial, which appears to provide further additional conditions on the local Hamiltonian $h$.

We see what happens in the case of $n=4$ and 5.
The $uv^3$ coefficient always admits the solution
\eq{
\cR^{(4)}=\frac{1}{24}(h^4+2(hX+Xh))
}
as long as the Reshetikhin condition is satisfied, where $X$ is the operator employed in the Reshetikhin condition.
In other words, the $uv^3$ coefficient does not impose an additional constraint on $h$.

The second nontrivial condition is given as the coefficients of $uv^4$, which reads
\eq{
&[\a^3+\b^3+3(X_{12}+X_{23}), [\a,\b]] +3(\a[[\a,\b],\a]\a +\b [[\a,\b],\b]\b) \nt \\
&+(\a\b+\b\a)\Di X +\Di X(\a\b+\b\a) -2(\a\Di X \b+\b\Di X\a) \nt \\
=&Y_{23}-Y_{12}. \lb{R2-s}
}
Here, we defined $\Di X:=X_{23}-X_{12}$.
This condition is sometimes called the {\it second integrability test}~\cite{Bibikov2003TaylorExpansion}, in contrast to the Reshetikhin condition as the first integrability test.
If the second Reshetikhin condition is satisfied, we have a solution to $\cR^{(5)}$ as
\eq{
\cR^{(5)}=\frac{1}{120}(h^5+Y+2(Xh^2+h^2X)+6hXh).
}
Similarly, the coefficients of $uv^{2m}$ provide a nontrivial constraint on $h$, which is sometimes referred to as the $m$-th integrability test.

\bigskip

It is noteworthy that the higher-order Reshetikhin conditions including the original one do not manifestly imply the Yang--Baxter equation.
All the higher-order Reshetikhin conditions are equivalent to the Sutherland equation \eqref{Sutherland-s}.
The Sutherland equation is obtained by differentiating the Yang--Baxter equation with respect to one of its two parameters and then setting that parameter to zero.
In this process, some information may be lost. 
More precisely, the Sutherland equation is equivalent only to the $uv^n$ coefficients of the Yang--Baxter equation for all $n$.
Thus, the $u^kv^l$ coefficients with $k,l\geq 2$ are, in principle, not captured by the Sutherland equation or by the higher-order Reshetikhin conditions.

In summary, although the equivalence of the Sutherland equation and the Yang--Baxter equation is strongly expected~\cite{deLeeuwEtAl2021Boost} and declared without proof~\cite{zhang2026bootstrapping}, there remains a gap between these two.
One of our results (\lref{Sutherland-YBE}) fills this gap and establishes the equivalence of the Sutherland equation and the Yang--Baxter equation.

\section{Equivalence of the Reshetikhin condition and the Yang--Baxter equation}

\subsection{Main theorem}

As seen in the preceding section, whether a given local Hamiltonian $h$ has a corresponding $R$-matrix is a highly nontrivial problem.
Since the Reshetikhin condition is a simple necessary condition, higher-order integrability tests, which come from the coefficients of $uv^k$ with even $k$ in the Yang--Baxter equation, appear to impose additional constraints on $h$~\cite{Bibikov2003TaylorExpansion, Bibikov2000Derivation, fonseca2015r, MutterSchmitt1995SpinOne}.
More seriously, even if all the higher-order Reshetikhin conditions are satisfied, conditions coming from the coefficients of $u^kv^l$ with $l,k\geq 2$ may impose further constraints.

Surprisingly, we establish that the Reshetikhin condition is the only constraint on the local Hamiltonian $h$ to fulfill the Yang--Baxter equation.
In other words, the lowest-order Yang--Baxter equation and the full-order Yang--Baxter equation impose the same constraints on $h$, and all the higher-order Reshetikhin conditions and further conditions coming from the coefficients of $u^kv^l$ are in fact redundant.

\begin{theorem}[\tref{main} in the main article]\lb{t:main-s}
Consider a one-dimensional, translation-invariant spin chain with nearest-neighbour interactions, whose local Hilbert space is finite-dimensional.
Suppose that a two-site Hamiltonian density $h$ satisfies the Reshetikhin condition \eqref{Reshetikhin} and that the eigenstates of $H=\sum_{i=1}^L h_{i,i+1}$ span the state space under consideration.
Then there exists a regular $R$-matrix, analytic near $u=0$, satisfying the Yang--Baxter equation \eqref{YBE} whose corresponding transfer matrix reproduces $H$ as its logarithmic derivative.
\end{theorem}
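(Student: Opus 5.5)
We will follow the two-step strategy sketched in the main text and construct the braided $R$-matrix $\cR(u)=I+hu+\sum_{n\ge2}\cR^{(n)}u^n$ order by order. We treat it as a formal power series and record the Yang--Baxter defects $\cF^{l,k}_{123}$ through \eref{cYBE-c-m}. The induction hypothesis at stage $n$ is that $\cR^{(2)},\dots,\cR^{(n-1)}$ have been chosen so that all defects of total degree $\le n-1$ vanish. The base case is provided by $\cR^{(2)}=h^2/2$ and $\cR^{(3)}=(h^3+X)/6$, using the Reshetikhin condition \eqref{Res-s}. The inductive step then has two parts, (a) and (b) below. Convergence (analyticity near $u=0$) is handled separately at the end.

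\emph{(a) Killing the $uv^{n-1}$ defect.} Degree-$n$ defects depend on $\cR^{(n)}$ only through the combination $\cR^{(n)}_{12}-\cR^{(n)}_{23}$, and this combination enters $\cF^{1,n-1}$ linearly. It therefore suffices to show that the part of $\cF^{1,n-1}_{123}$ computed with $\cR^{(n)}=0$ is telescopic.
\begin{enumerate}
\item Differentiate the corrected Yang--Baxter equation in $v$ at $v=0$. This gives the Sutherland equation \eqref{Sutherland-c} with an error of order $u^{n-1}$.
\item Telescope it along the chain to get $[B,T(u)]=T'(u)+O(u^{n-1})$, hence $Q_m=\QB{m}$ for $m\le n$ (\lref{QT=QB}).
\item Compute $[T(u),H]$ to order $u^{n-1}$ in two ways. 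The first is the direct telescoped sum, whose leading term is $-S\sum_i\cF^{1,n-1}_{i-1,i,i+1}u^{n-1}$. The second uses $T(u)=S\exp(\sum_{m\le n}u^{m-1}\QB{m}/(m-1)!)+O(u^n)$ together with \lref{Hokkyo-s}, which uses the diagonalizability hypothesis and the Reshetikhin condition to give $[\QB{m},H]=0$; this yields $[T(u),H]=O(u^n)$.
\item Conclude that $\sum_i\cF^{1,n-1}_{i-1,i,i+1}=0$ for all sufficiently large $L$. A standard locality lemma (a translation-invariant sum of a three-site operator vanishes identically iff the density is $Z_{23}-Z_{12}$ modulo such terms) then gives the telescopic form.
\item Choose $\cR^{(n)}$ to cancel $Z_{23}-Z_{12}$.
\end{enumerate}
The $O(u^n)$ bookkeeping of the leading correction term must be checked carefully, including the boundary/trace of the boost.

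\emph{(b) Killing all other degree-$n$ defects.} We will show that the degree-$n$ defect $\omega^n(u,v)$ of \eqref{omega-def} satisfies the additive cocycle identity
\[
\omega^n(u,v)+\omega^n(u+v,w)=\omega^n(v,w)+\omega^n(u,v+w).
\]
To see this, reduce both sides of the four-strand identity $A(u)B(u+v)A(v)C(u+v+w)B(v+w)A(w)$ to $C(w)B(v+w)A(u+v+w)C(v)B(u+v)C(u)$ along the two braid-move paths of the Zamolodchikov tetrahedron. At lowest nonvanishing degree, each application of Yang--Baxter contributes one insertion of $\omega^n$, with all other factors replaced by the identity since $\cR(0)=I$. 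The two paths differ exactly by the four displayed terms. Some care is needed here, because the insertions live on different strand triples (123, 234, 124-type after conjugation by permutations at zeroth order), and one must verify that they identify to the same operator-valued function.

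Given the cocycle identity, we solve the homogeneous polynomial cocycle equation in scalars $u,v,w$ with operator coefficients; setting $w=0$ and using $\omega^n(u,0)=0$ is the start. The solution is $\omega^n=K((u+v)^n-u^n-v^n)$. Since the $uv^{n-1}$ coefficient $nK$ vanishes by (a), $K=0$, so the whole degree-$n$ defect vanishes. This closes the induction, and the logarithmic derivative reproduces $H$ because $\cR^{(1)}=h$.

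\emph{Main obstacle.} The hardest step is analyticity: the construction yields only a formal solution, and the choice of $Z$ at each order is not canonical. We would first try to fix $\cR^{(n)}$ via a bounded linear right-inverse of the telescoping map $Z\mapsto Z_{23}-Z_{12}$ on the finite-dimensional space $\mathrm{End}(V^{\otimes3})$ and derive Cauchy-type majorant estimates $\|\cR^{(n)}\|\le C^n$. If this fails, the fallback is to invoke Artin's approximation theorem: the Yang--Baxter equation is a polynomial system in the finitely many matrix entries of the coefficients, so existence of a formal solution implies existence of a convergent solution agreeing with it to first order, which preserves regularity and $\cR^{(1)}=h$. We expect this second route to be the cleaner one.
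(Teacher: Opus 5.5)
Your part (a) follows the paper's argument: the modified Sutherland equation, $\QT{m}=\QB{m}$, two evaluations of $[T(u),H]$ using Hokkyo's lemma, and locality. Part (b), however, has a genuine gap at the place you marked ``some care is needed'', and the issue is not the one you name.

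\textbf{The cocycle identity only holds modulo scalars.} In braided form every spectator factor is $\cR(0)=I$, so each insertion is $\omega^n_{123}$ or $\omega^n_{234}$; there is no $124$-type term. Write $\delta\omega^n(u,v,w):=\omega^n(u,v)+\omega^n(u+v,w)-\omega^n(v,w)-\omega^n(u,v+w)$. Equating the degree-$n$ parts of the two tetrahedron paths then gives only $(\delta\omega^n)_{123}\otimes I_4=I_1\otimes(\delta\omega^n)_{234}$. Taking partial traces over sites $4$ and $1$ forces $\delta\omega^n=f(u,v,w)\,I$, not $\delta\omega^n=0$.

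None of your other constraints removes this scalar freedom. For example, $\omega^4=a\,u^2v^2 I$ vanishes at $v=0$, has zero $uv^3$ coefficient, and is a cocycle modulo scalars, so it would survive your induction.

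The missing ingredient is $\Tr\,\omega^n=0$, which the paper obtains by comparing determinants. On the one hand, $\det[\cR_{12}(u)\cR_{23}(u+v)\cR_{12}(v)]=\det[\cR_{23}(v)\cR_{12}(u+v)\cR_{23}(u)]$ holds identically, since each factor contributes $(\det\cR)^d$. On the other hand, expanding $\det[X+\omega^n+\cdots]$ with $X=I+O(u,v)$ gives $\det X\,(1+\Tr\,\omega^n+\cdots)$. Tracing $\delta\omega^n=fI$ then yields $f d^3=0$. Only at that point does your coboundary step go through, giving $\omega^n=K((u+v)^n-u^n-v^n)$ and then $K=0$ from (a).

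\textbf{Your analyticity fallback does not work.} Artin approximation concerns systems $f(x,y(x))=0$ in which the unknown series are evaluated at the same point. The Yang--Baxter equation instead evaluates the single unknown $\cR$ at $u$, $v$ and $u+v$. Coefficientwise it is an infinite system in the infinitely many unknowns $\cR^{(n)}$, not a polynomial system in finitely many matrix entries. Artin's theorem therefore gives you nothing for this functional (or, via Sutherland, differential) equation.

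Your first route is the correct one, and it is what the paper does:
\begin{itemize}
\item Determine $\cR^{(n+1)}$ from the $uv^n$ (Sutherland) recursion $(n+1)(\cR^{(n+1)}_{23}-\cR^{(n+1)}_{12})=\Phi_n$, where $\Phi_n$ is built from lower coefficients.
\item Use the normalized partial trace as a fixed contractive right inverse, so that $\|\cR^{(n+1)}\|\le\frac{2}{n+1}\|\Phi_n\|$.
\item The prefactor $1/(n+1)$ absorbs the weights $|n+1-2j|\le n+1$ inside $\Phi_n$. This yields a quadratic convolution majorant whose generating function is analytic at $0$ by the implicit function theorem.
\end{itemize}
You must also make the choice of $\cR^{(n)}$ in step (a) with this same fixed right inverse. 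Otherwise the estimate does not apply to the formal solution you actually constructed.
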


This theorem establishes that the Reshetikhin condition and the Yang--Baxter equation are equivalent, resolving conjectures raised independently by Kulish and Sklyanin~\cite{KulishSklyanin1982QuantumSpectral}, \"{O}ttinger and Honerkamp~\cite{OttingerHonerkamp1982YangBaxter}, and Jimbo and Miwa~\cite{JimboMiwa1984Differential}, and later reformulated by Grabowski and Mathieu~\cite{GrabowskiMathieu1995IntegrabilityTest} in the affirmative.
In other words, the Reshetikhin condition is a simple necessary and sufficient condition for the Yang--Baxter equation, which greatly reduces the work required to search for and identify novel integrable models compared to computational search of $R$-matrix~\cite{lal2025deep}.

The latter condition, namely the completeness of the set of energy eigenstates, is satisfied for a wide class of Hamiltonians.
The simplest example is a Hermitian Hamiltonian with a finite-dimensional local Hilbert space, since the total Hilbert space is then finite-dimensional.
Even for a non-Hermitian Hamiltonian, the same condition is satisfied whenever the Hamiltonian is diagonalizable.

This theorem has implications in various directions.
The first concerns a refinement of the notion of quantum integrability.
In classical Hamiltonian systems, the Liouville--Arnold theorem establishes a fundamental connection between solvability and the existence of sufficiently many conserved quantities.
By contrast, no universally accepted definition of integrability has been established for quantum many-body systems~\cite{weigert1992problem, caux2011remarks}, and the connection between Yang--Baxter solvability and the existence of local conserved quantities has not yet been fully clarified.
Our theorem bridges the Yang--Baxter solvability and local conserved quantities, since the Reshetikhin condition is equivalent to the conservation of the three-local quantity $\QB{3}$ generated by the boost operator, namely $[\QB{3},H]=0$.
A particularly important case is an isotropic spin chain.
In this setting, combining the present result with the dichotomy theorem~\cite{ShiraishiYamaguchi2026Dichotomy} shows that Yang--Baxter solvability is equivalent to the existence of an infinite family of local conserved quantities, providing a quantum counterpart of the Liouville--Arnold theorem, whereas for general quantum chains, the analogous but slightly weaker equivalence holds between Yang--Baxter solvability and the existence of an infinite hierarchy of local conserved quantities generated recursively by the boost operator.

The second implication concerns the experimental verification of solvability.
Solvability is usually regarded as a purely mathematical notion, and hence as being inaccessible to direct experimental tests.
However, our result establishes that Yang--Baxter integrability can be tested experimentally through measurements of the total energy current~\cite{zotos1997transport, klumper2002thermal} (i.e., whether the total energy current is conserved in time or not).
This provides an unexpected route to testing Yang--Baxter solvability in the laboratory.

\subsection{Historical Background and Related Works}

Over the past century, the study of exactly solvable models developed along several initially distinct lines, including the Bethe ansatz~\cite{Bethe1931Metalle}, two-dimensional lattice models~\cite{Onsager1944CrystalStatistics, Baxter1972EightVertex}, factorized scattering~\cite{Yang1967ManyBody}, and the classical inverse scattering method. 
The establishment of the quantum inverse scattering method~\cite{TakhtajanFaddeev1979InverseProblem, SklyaninTakhtajanFaddeev1980InverseProblem, KulishSklyanin1982QuantumSpectral} brought these developments into a common algebraic framework. 
In this framework, an $R$-matrix satisfying the Yang--Baxter equation generates a commuting family of transfer matrices. 
Their spectral problem can be treated using the algebraic Bethe ansatz, while the Hamiltonian and an infinite family of local conserved charges are obtained from logarithmic derivatives of the transfer matrix.
Within the quantum inverse scattering method, the local conserved charges $Q_n$ are obtained as logarithmic derivatives of the transfer matrix. 
The boost operator, introduced and developed by Tetel’man~\cite{tetel1982lorentz} and Thacker~\cite{Thacker1986CornerTransfer}, provided a complementary description of the same hierarchy by relating the charges directly through
$Q_{n+1}^B = [B,Q_n^B].$
For models constructed from the Yang--Baxter equation, these operators coincide, up to normalization conventions, with the charges generated by the transfer matrix and mutually commute. 
Thus, the quantum inverse scattering method not only unified techniques that had previously appeared specific to individual solvable models, but also identified the Yang--Baxter equation as their common underlying structure.

Once this framework had been established, it was natural to ask the converse question: whether the lowest nontrivial condition obtained by expanding the Yang--Baxter equation around the regular point, namely the Reshetikhin condition \eqref{Res-s}, already implies the full Yang--Baxter equation. 
This question was independently raised in several forms in the early 1980s.
Reshetikhin derived the lowest nontrivial compatibility condition for reconstructing a regular Yang--Baxter family from its Hamiltonian density, though this observation was not published separately by Reshetikhin.
This idea appeared in the paper by Kulish and Sklyanin~\cite{KulishSklyanin1982QuantumSpectral}, where it was attributed to Reshetikhin as a private communication. 
Kulish and Sklyanin explicitly stated that whether the Reshetikhin condition is sufficient for the Yang--Baxter equation remained an open problem.
\"{O}ttinger and Honerkamp~\cite{OttingerHonerkamp1982YangBaxter} studied a system with a $\bbZ_2\times \bbZ_2$ state space. 
Based on a comparison between the number of independent equations and the number of unknown variables, they argued that the Reshetikhin condition is sufficient for the Yang--Baxter equation in this setting. 
They further suggested that the Reshetikhin condition might remain sufficient beyond this particular class. 
Jimbo and Miwa~\cite{JimboMiwa1984Differential} also conjectured that the Reshetikhin condition is sufficient for the Yang--Baxter equation.

We here clarify the relation between the above conjecture and the name ``tangential star--triangle hypothesis.''
For this purpose, it is useful to distinguish the following two related but qualitatively different questions:
\begin{enumerate}
\renewcommand{\labelenumi}{(\Alph{enumi})}
  \item We do not assume the existence of an $R$-matrix satisfying the Yang--Baxter equation.
  Given a local Hamiltonian density $h$ satisfying the Reshetikhin condition, we ask whether there exists a suitable $R$-matrix satisfying the Yang--Baxter equation that reproduces the Hamiltonian $H=\sum_i h_{i,i+1}$.

  \item We assume the existence of an $R$-matrix satisfying the Yang--Baxter equation.
  From this $R$-matrix, we obtain the Hamiltonian $H=\sum_i h_{i,i+1}$.
  We then ask whether the $R$-matrix can be reconstructed from the information of the Hamiltonian $H$.
\end{enumerate}
These are qualitatively different questions.
Indeed, if the existence of a suitable $R$-matrix is presupposed, then the order $u^kv^{l}$ of the Yang--Baxter equation necessarily admits the corresponding coefficient $\cR^{(l+k)}$ as a solution.
Thus, the reconstruction of all $\cR^{(n)}$ from the local Hamiltonian density $h$ always succeeds once the existence of the underlying $R$-matrix is assumed.
In this sense, question (B) has an affirmative answer essentially by construction.
In contrast, the nontrivial issue in question (A) is whether, starting only from a Hamiltonian density satisfying the Reshetikhin condition, the order $u^kv^l$ of the Yang--Baxter equation admits a consistent solution for $\cR^{(l+k)}$ for all $l$ and $k$.
The conjectures raised by \"{O}ttinger and Honerkamp and by Jimbo and Miwa concern question (A).
The terminology used in the literature is, however, somewhat ambiguous.
Idzumi, Tokihiro, and Arai~\cite{IdzumiTokihiroArai1994NineteenVertex} formulated question (B) and referred to it as the tangential star--triangle hypothesis, while at the same time attributing the hypothesis to Jimbo and Miwa.
Somewhat puzzlingly, however, they also sought to determine all Yang--Baxter-solvable $R$-matrices within a certain class, which is closer in spirit to question (A).
More recent papers have likewise used the term in both senses.
Hokkyo~\cite{Hokkyo2026SingleConservation} and Paletta and Prosen~\cite{PalettaProsen2026Rule54} use the term ``tangential star--triangle hypothesis'' in the sense of question (A).
We note that Paletta~\cite{Paletta2023OpenQuantumSystems} uses it in the sense of question (B), although the broader objective of that work is again closely related to question (A).
Thus, question (A) appears to be more closely aligned with the original conjectural problem associated with the Reshetikhin condition, whereas the term ``tangential star--triangle hypothesis" has also been used for the formally distinct question (B). 
To avoid this terminological ambiguity, we do not use the term tangential star--triangle hypothesis in this paper.

Beyond the terminological issue, the Reshetikhin condition was subsequently used as a practical first test for Yang--Baxter solvability in the classification of Hamiltonians and the search for new integrable models.
A typical procedure is first to test candidate Hamiltonians against the Reshetikhin condition, and if a Hamiltonian passes this test, one then attempts to construct a corresponding $R$-matrix satisfying the Yang--Baxter equation.
Kennedy~\cite{Kennedy1992IsotropicSpinChains} and Batchelor and Yung~\cite{BatchelorYung1994Haldane} numerically examined all isotropic nearest-neighbor spin chains for spins $S\leq 13.5$.
They found that every model satisfying the Reshetikhin condition belonged to one of four known infinite Yang--Baxter-solvable families, together with one exceptional model at $S=3$.
Similar classification programs were carried out for several classes of symmetric $S=1$ spin chains~\cite{IdzumiTokihiroArai1994NineteenVertex, MutterSchmitt1995SpinOne, BibikovNuramatov2014Rmatrices, fonseca2015r} and for $4\times 4$ Hamiltonians~\cite{hietarinta1992all, Bibikov2000Derivation, Bibikov2003TaylorExpansion, Vieira2018DifferentialApproach, deLeeuwPribytokRyan2019SpinHalf, deLeeuwEtAl2021Boost, deLeeuwEtAl2020Superconductivity, maity2024algebraic}.
For these classification problems, Bibikov~\cite{Bibikov2000Derivation} and Fonseca et al.~\cite{fonseca2015r} developed a systematic procedure for comparing the Yang--Baxter equation order by order in its series expansion.
Vieira~\cite{Vieira2018DifferentialApproach} supposed the Sutherland equation and solved it to obtain $R$-matrices satisfying the Yang--Baxter equation.
This approach was subsequently employed and further developed in a series of works by de Leeuw and collaborators.
More recently, a method for proving the absence of nontrivial local conserved quantities was developed~\cite{Shiraishi2019XYZ}, and using this method, direct classifications were obtained for inversion-symmetric nearest-neighbor spin-$1/2$ chains~\cite{YamaguchiChibaShiraishi2024NearestNeighbor, Shiraishi2025NextNearestNeighbor, yamaguchi2024proof} and for spin-$1$ bilinear-biquadratic chains~\cite{ParkLee2025SpinOne, HokkyoYamaguchiChiba2025SpinOne}.
Motivated by this accumulated evidence from old to new, de Leeuw et al.~\cite{deLeeuwEtAl2021Boost} reiterated the conjecture that the Reshetikhin condition provides a sufficient condition for the Yang--Baxter equation, and identified a proof of this statement in full generality as an important open problem.

Toward resolving this conjecture, various studies have sought to bridge the remaining gap in this converse problem.
Grabowski and Mathieu~\cite{GrabowskiMathieu1995IntegrabilityTest} focused on local conserved quantities and revisited the Reshetikhin condition in the form
$[Q_2,Q_3^B]=0$, namely, as the conservation of the first nontrivial charge generated by the boost operator.
They thereby re-emphasized its usefulness as a practical test of integrability  and highlighted its connection to the existence of an infinite family of local conserved quantities.
 More recently, Hokkyo~\cite{Hokkyo2026SingleConservation} proved that this single conservation law implies the conservation of all boost-generated quantities $Q_n^B$, as well as their mutual commutativity.
For a historical review of this approach, we refer the reader to ref.~\cite{Hokkyo2026SingleConservation}.
Hokkyo's result plays an important role in our reconstruction of an $R$-matrix satisfying the Yang--Baxter equation.

Another important approach is Baxterization, in which an $R$-matrix satisfying the Yang--Baxter equation is constructed from a suitable local algebra generated by the local Hamiltonian.
A pioneering result in this direction was obtained by Jimbo~\cite{Jimbo1986QAnalogue}.
He constructed an $R$-matrix satisfying the Yang--Baxter equation from a generator of the Hecke algebra.
Jones~\cite{Jones1991Baxterization} formulated this idea as a general procedure and coined the term ``Baxterization".
This algebraic framework was connected explicitly to local Hamiltonians by Alcaraz {\it et al}.~\cite{AlcarazEtAl1994ReactionDiffusion}.
They assumed that the local Hamiltonian density generates a representation of the Hecke algebra and derived an $R$-matrix satisfying the Yang--Baxter equation through Baxterization.
Bibikov~\cite{Bibikov2007DefiningRelations} further developed this approach and obtained several sufficient algebraic conditions on the local Hamiltonian density for the existence of an $R$-matrix satisfying the Yang--Baxter equation.
Compared with our result, these approaches require substantially stronger and model-dependent algebraic assumptions than the Reshetikhin condition.

\medskip

In the present work, we construct a solution of the Sutherland equation directly from the Reshetikhin condition and further prove that this solution satisfies the full difference-form Yang--Baxter equation. 
This gives an affirmative resolution of the conjecture proposed in the early 1980s and supported by several decades of model construction and classification. 
Previous results bridged only part of the gap, either by imposing additional assumptions or by restricting the class of Hamiltonians under consideration.
By contrast, our result bridges the entire gap from a Hamiltonian-level condition to an $R$-matrix satisfying the Yang--Baxter equation.
It requires no additional algebraic assumptions beyond the conditions stated in our theorem and applies to general quantum chains within this setting.
This result substantially advances our understanding of the foundations of quantum integrability.

Our result brings previous classification programs to a more advanced and refined stage.
In earlier classification studies, one first tested candidate Hamiltonians against the Reshetikhin condition, and if a Hamiltonian passed this test, one then attempted to reconstruct a corresponding $R$-matrix by solving differential equations or carrying out recursive algebraic calculations.
Our theorem shows that the initial Reshetikhin test is already sufficient and that the subsequent cumbersome reconstruction of the $R$-matrix is, in principle, unnecessary.
This striking simplification makes the classification of Yang--Baxter-solvable Hamiltonians substantially more tractable.

Our result is also suggestive to the Baxterization approach.
Traditional Baxterization starts from a suitable algebraic structure and uses its relations to construct an $R$-matrix satisfying the Yang--Baxter equation.
Since each construction relies on a particular algebra, it can access only a restricted subset of Yang--Baxter-integrable models.
By contrast, our finding says that, within the setting of our result, the essential requirement is only the Reshetikhin condition.
Therefore, any algebraic structure whose local Hamiltonian representation satisfies the Reshetikhin condition must admit a corresponding Yang--Baxter $R$-matrix.
This observation motivates the search for a universal algebraic framework encompassing all local algebraic structures compatible with the Reshetikhin condition.

\subsection{Proof of \tref{main-s}}

In proving \tref{main-s}, we expand the Yang--Baxter equation \eqref{YB-s} in terms of $u$ and $v$.
Consider a two-site operator $R$ which does not necessarily satisfy the Yang--Baxter equation.
Then, inserting this $R$ into both sides of the Yang--Baxter equation, we may have finite discrepancy, which can be expanded as
\eq{
R_{12}(u)R_{13}(u+v)R_{23}(v)-R_{23}(v)R_{13}(u+v)R_{12}(u)
=&\sum_{k,l}F^{k,l}_{123}u^kv^l. \lb{YBE-c-s}
}
In terms of the braided $R$ matrix $\cR$, the above expansion reads
\eqa{
\cR_{12}(u)\cR_{23}(u+v)\cR_{12}(v)-\cR_{23}(v)\cR_{12}(u+v)\cR_{23}(u)=\sum_{k,l}\cF^{l,k}_{123}u^lv^k,
}{mod-YB-c}
where we defined
\eq{
\cF^{l,k}_{abc}=- \Pi_{ac} F^{k,l}_{abc}.
}
As clearly seen from above, if one of $F^{k,l}_{abc}$ or $\cF^{l,k}_{abc}$ is shown to be zero, then the other is also zero.
As explained in \ssref{Reshetikhin}, $\cF^{1,1}_{123}$ (and thus $F^{1,1}_{123}$) can always be set to zero by choosing a proper $\cR^{(2)}$.
On the other hand, $\cF^{1,2}_{123}$ cannot necessarily be made to vanish by a suitable choice of $\cR^{(3)}$.
This can vanish if and only if $\cF^{1,2}_{123}$ takes a telescopic form: $\cF^{1,2}_{123}=X_{23}-X_{12}$, which is the Reshetikhin condition.

Our goal is to prove that if $\cF^{1,2}_{123}$ can be made to vanish by a suitable choice of $\cR^{(3)}$, then all $\cF^{l,k}_{123}$ can be made to vanish by a suitable choice of ${\cR}^{(n)}$'s.
We shall show this by proving the following two lemmas:

\blm{[Same as \lref{Reshetikhin-Sutherland} in the main article]\lb{t:Reshetikhin-Sutherland-s}
Suppose that a given Hamiltonian $h$ satisfies the Reshetikhin condition \eqref{Res-s} and that $H = \sum_i h_{i,i+1}$ is  diagonalizable.
Then, there exists suitable ${\cR}^{(n)}$'s which make $F^{k,1}_{123}=0$ (equivalently, $\cF^{1,k}_{123}=0$) for all $k$.
}

\blm{[Same as \lref{Sutherland-YBE} in the main article]\lb{t:Sutherland-YBE-s}
Suppose that $\cF^{l,k}_{123}=0$ for all $k$ and $l$ with $k+l\leq n-1$.
Then, all $\cF^{l,k}_{123}$'s with $k+l=n$ and $k,l \geq 1$ are proportional to each other. 
}

\lref{Reshetikhin-Sutherland-s} establishes that the Reshetikhin condition is equivalent to the presence of the Sutherland equation \eqref{Sutherland-s} without any correction.
\lref{Sutherland-YBE-s} establishes the equivalence between the Sutherland equation and the Yang--Baxter equation.
Below, we prove these two lemmas.

\subsection{Proof of \lref{Reshetikhin-Sutherland-s}}\lb{s:Reshetikhin-Sutherland}

To prove \lref{Reshetikhin-Sutherland-s}, we first show the following lemma:
\blm{[Same as \lref{QT=QB} in the main article]\lb{t:QT=QB-s}
Suppose that $F^{k,1}=0$ for all $k\leq n-2$ with  suitable ${\cR}^{(2)}, \dots ,{\cR}^{(n-1)}$.
Then, we have $\QT{m}=\QB{m}$ for $2 \leq m \leq n$.
}

In the following proof, we do not care about the boundary effect, similarly to \sref{boost}.
The detailed treatment of the boundary is discussed in \sref{boost-boundary}.

\begin{proof}
We differentiate the modified Yang--Baxter equation with correction terms \eqref{YBE-c-s} with respect to $v$ and then set $v=0$, which leads to a modified Sutherland equation with correction terms of order $O(u^{n-1})$:
\eq{
\( \frac{d}{du}R_{13}(u)\) R_{12}(u) -R_{13}(u)\( \frac{d}{du}R_{12}(u) \) =[ R_{13}(u)R_{12}(u), h_{23}]-\sum_{k=n-1}^\infty \Pi_{23}F^{k,1}_{123} u^k. \label{Sutherland-c-s}
}
Here, lower-degree terms are absent because $F^{k,1}=0$ for all $1 \leq k\leq n-2$ by supposition.
We shall follow the standard derivation of $\QB{n}=\QT{n}$ seen in \sref{boost}, but with replacing the Sutherland equation by the above modified one.
By relabeling the subscripts of \eref{Sutherland-c-s} as $1\to a$, $2\to i$, and $3\to i+1$, and multiplying the resulting relation by $\prod_{j\leq i-1}R_{aj}(u)$ from right and by $\prod_{j\geq i+2}R_{aj}(u)$ from left, we obtain
\eq{
&\prod_{j\geq i+2}R_{aj}(u)\( \frac{d}{du}R_{a,i+1}(u)\) \prod_{j\leq i}R_{aj}(u) -\prod_{j\geq i+1}R_{aj}(u) \( \frac{d}{du}R_{ai}(u)\) \prod_{j\leq i-1}R_{aj}(u)
=[ \prod_j R_{aj}(u), h_{i,i+1}] +O(u^{n-1}). \lb{R-Sutherland-R}
}
Here, the precise description of $O(u^{n-1})$ term is
\eq{
-\sum_{k={n-1}}^\infty  \prod_{j\geq i+2}R_{aj}(u)\Pi_{i,i+1}F^{k,1}_{a,i,i+1} \prod_{j\leq i-1}R_{aj}(u) u^k.
}
Multiplying by $i$ and summing over $i$, we arrive at
\eq{
\frac{d}{du} \prod_j R_{aj}(u)=[B,\prod_j R_{aj}(u)] +O(u^{n-1}).
}
Taking the trace over $a$, we arrive at 
\eqa{
[B,T(u)]=\frac{d}{du}T(u)+O(u^{n-1}).
}{boost-T-c}
Following a similar argument in \sref{boost}, \eref{boost-T-c} implies
\eqa{
[B,\ln T(u)]=\frac{d}{du}\ln T(u)+O(u^{n-1}).
}{boost-lnT-c}
Expanding the logarithm as
\eq{
\ln T(u)=\ln S+\sum_{m=2}^{\infty}\frac{u^{m-1}}{(m-1)!}\QT{m},
}
and comparing the coefficients of $u^{m-2}$ in \eref{boost-lnT-c}, for $3\leq m\leq n$, we obtain
\eq{
[B,\QT{m-1}]=\QT{m}.
}
Since $\QT{2}=\QB{2}=H$, this recursively gives
\eq{
\QT{m}=\QB{m},\qquad 2\leq m\leq n,
}
proving \lref{QT=QB-s}.
\end{proof}

\bpf[Proof of \lref{Reshetikhin-Sutherland-s}]

We shall show $\sum_{i=1}^L \cF^{1,n-1}_{i-1,i,i+1}=0$
by expanding $[T(u),H]$ up to order $u^{n-1}$ in two different ways.

In the first evaluation, we sum \eref{R-Sutherland-R} over $i$.
This calculation is similar to the derivation of \eref{boost-T-c}, but in this case we do not multiply by $i$.
Since the left-hand side vanishes in a telescopic manner, we have
\eq{
0=[\prod_j R_{aj}(u),H]
-\sum_i\sum_{k=n-1}^{\infty}
\left[\prod_{j\geq i+2}R_{aj}(u)\right]
\Pi_{i,i+1}F^{k,1}_{a,i,i+1}
\left[\prod_{j\leq i-1}R_{aj}(u)\right]u^k.
}
Its partial trace over the auxiliary space $a$ reads
\eq{
0=[T(u),H]
-\sum_i\sum_{k=n-1}^{\infty}
\Tr_a\left[
\left[\prod_{j\geq i+2}R_{aj}(u)\right]
\Pi_{i,i+1}F^{k,1}_{a,i,i+1}
\left[\prod_{j\leq i-1}R_{aj}(u)\right]
\right]u^k.
}
Since the second term starts at order $u^{n-1}$, its leading contribution is obtained by setting $k=n-1$ and replacing all the remaining $R(u)$'s by $R(0)=\Pi$. We therefore have
\eq{
[T(u),H]
=&\sum_i
\Tr_a\left[
\left[\prod_{j\geq i+2}\Pi_{a,j}\right]
\Pi_{i,i+1}F^{n-1,1}_{a,i,i+1}
\left[\prod_{j\leq i-1}\Pi_{a,j}\right]
\right]u^{n-1}
+O(u^n).
}
The permutation product appearing in the leading term satisfies
\eq{
\Tr_a\left[
\left[\prod_{j\geq i+2}\Pi_{a,j}\right]
\Pi_{i,i+1}F^{n-1,1}_{a,i,i+1}
\left[\prod_{j\leq i-1}\Pi_{a,j}\right]
\right]
=&S\Pi_{i-1,i+1}F^{n-1,1}_{i-1,i,i+1}
=-S\cF^{1,n-1}_{i-1,i,i+1}.
}
The first equality is shown as follows.
The rightmost operator $\prod_{j\leq i-1}\Pi_{a,j}$ maps the labels as
\eq{
a\to 1,\quad
1\to 2,\quad
2\to 3,\quad
\ldots,\quad
i-2\to i-1,\quad
i-1\to a.
}
We next apply $F^{n-1,1}_{a,i,i+1}$.
Since the original site $i-1$ is mapped onto the auxiliary space $a$ at this stage, the operator $F^{n-1,1}_{a,i,i+1}$ acts on the original sites as
$
F^{n-1,1}_{i-1,i,i+1}.
$
After the preceding operations, applying
$\left[\prod_{j\geq i+2}\Pi_{a,j}\right]\Pi_{i,i+1}$
gives the following total permutation:
\eqa{
r\to r+1
\qquad
(r\neq i-1,i,i+1,L),
}{total-permutation-first-evaluation}
while the exceptional labels are mapped as
\eq{
a\to 1,\quad
i-1\to i+2,\quad
i\to i+1,\quad
i+1\to i,\quad
L\to a.
}
The partial trace over $a$ yields
\eq{
L\to 1,
}
which is consistent with the general permutation rule \eqref{total-permutation-first-evaluation}.
Thus, on the physical sites, the resulting permutation is the one-site shift $S$ followed by the exchange of sites $i-1$ and $i+1$. We therefore obtain
$
S\Pi_{i-1,i+1}F^{n-1,1}_{i-1,i,i+1}.
$
The second equality follows directly from the definition
\eq{
\cF^{l,k}_{abc}=-\Pi_{ac}F^{k,l}_{abc}.
}
Substituting the above relation into the partial trace, we obtain
\eq{
[T(u),H]
=
-S\sum_i\cF^{1,n-1}_{i-1,i,i+1}u^{n-1}
+O(u^n).
}

In the second evaluation, since $S^{-1}T(0)=I$, we regard \eref{QT-s} as the series expansion of $\ln S^{-1}T(u)$ with respect to $u$:
\eq{
\ln S^{-1}T(u)
=
\sum_{m=2}^{\infty}
\frac{\QT{m}}{(m-1)!}u^{m-1}.
}
Up to order $u^{n-1}$, this gives
\eq{
T(u)
=&S\exp\left(
\sum_{m=2}^{n}
\frac{\QT{m}}{(m-1)!}u^{m-1}
\right)
+O(u^n)\nt\\
=&S\exp\left(
\sum_{m=2}^{n}
\frac{\QB{m}}{(m-1)!}u^{m-1}
\right)
+O(u^n).
}
In the second equality, we used \lref{QT=QB-s}, which states that
$\QT{m}=\QB{m}$ for $m\leq n$.
Since the shift operator $S$ commutes with the translation-invariant Hamiltonian $H$, and since $\QB{m}$ commutes with $H=\QB{2}$ by \lref{Hokkyo-s}, we obtain
\eq{
[T(u),H]=O(u^n).
}

Combining the two evaluations, we arrive at the desired relation
\eq{
\sum_{i=1}^{L}
\cF^{1,n-1}_{i-1,i,i+1}=0.
}
\epf

\subsection{Proof of \lref{Sutherland-YBE-s}}\lb{s:proof-S-YBE}

We next show \lref{Sutherland-YBE-s}, stating that all $\cF^{l,k}_{123}$ with $k+l=n$ are proportional to each other.
Combining \lref{Reshetikhin-Sutherland-s} (claiming $\cF^{1,n-1}_{123}=0$ for all $n$) and  \lref{Sutherland-YBE-s}, we find that $\cF^{l,k}_{123}=0$ for all $k$ and $l$, which implies the Yang--Baxter equation.

Let us denote the sum of the $n$-th order correction terms in the braided Yang--Baxter equation \eqref{mod-YB} by
\eqa{
\omega^n(u,v):=\sum_{k+l=n}\cF^{l,k}_{123}u^lv^k.
}{omega-def-s}
To prove \lref{Sutherland-YBE-s}, we first show the following lemma.

\blm{
Suppose that $\omega^m(u,v)=0$ for any $m\leq n-1$.
Then, we have the following {\it cocycle condition}:
\eqa{
\omega^n(u,v)+\omega^n(u+v,w)=\omega^n(v,w)+\omega^n(u,v+w).
}{cocycle}
}

\bpf
Let us denote $A=\cR_{12}$, $B=\cR_{23}$, and $C=\cR_{34}$.
Then, the braided Yang--Baxter equation on sites 123 and sites 234 respectively read
\balign{
A(x)B(x+y)A(y)&=B(y)A(x+y)B(x)+\omega_{123}^n(x,y)+O((x,y)^{n+1}), \\
B(x)C(x+y)B(y)&=C(y)B(x+y)C(x)+\omega_{234}^n(x,y)+O((x,y)^{n+1}), 
}
where $O((x,y)^{n+1})$ is the $n+1$-th order or higher polynomials of $x$ and $y$ (i.e., $x^ay^b$ terms with $a+b\geq n+1$).
In addition, we have the far commutativity 
\eq{
A(x)C(y)= C(y)A(x).
}
Using these relations, we can convert $ABACBA$ to $CBACBC$ by two ways,
\balign{
ABACBA &\to BABCBA \to BACBCA \to BCABAC \to BCBABC \to CBCABC \to CBACBC, \\
ABACBA &\to ABCABA \to ABCBAB \to ACBCAB \to CABACB \to CBABCB \to CBACBC,
} 
where we temporarily drop the arguments of $A$, $B$, and $C$  for brevity.
These two sequences correspond to the two sides of the {\it Zamolodchikov tetrahedron equation}, which expresses the consistency of different orders of applying the Yang--Baxter equation~\cite{Zamolodchikov1980,kapranov19942}.
By supplementing arguments, these two conversions read
\begin{align}
    & A(u)B(u+v)A(v)C(u+v+w)B(v+w)A(w)  \nt \\
    =& B(v)A(u+v)B(u)C(u+v+w)B(v+w)A(w) + O((u,v,w)^n) \nt \\
    =& B(v)A(u+v)C(v+w)B(u+v+w)C(u)A(w) + O((u,v,w)^n) \nt \\
    =& B(v)C(v+w)A(u+v)B(u+v+w)A(w)C(u) + O((u,v,w)^n) \nt \\
    =& B(v)C(v+w)B(w)A(u+v+w)B(u+v)C(u) + O((u,v,w)^n) \nt \\
    =& C(w)B(v+w)C(v)A(u+v+w)B(u+v)C(u) + O((u,v,w)^n) \nt \\
    =& C(w)B(v+w)A(u+v+w)C(v)B(u+v)C(u) + O((u,v,w)^n) \lb{ABACBA-1}
\end{align}
and
\begin{align}
    & A(u)B(u+v)A(v)C(u+v+w)B(v+w)A(w)  \nt \\
    =& A(u)B(u+v)C(u+v+w)A(v)B(v+w)A(w) + O((u,v,w)^n) \nt \\
    =& A(u)B(u+v)C(u+v+w)B(w)A(v+w)B(v) + O((u,v,w)^n) \nt \\
    =& A(u)C(w)B(u+v+w)C(u+v)A(v+w)B(v) + O((u,v,w)^n) \nt \\
    =& C(w)A(u)B(u+v+w)A(v+w)C(u+v)B(v) + O((u,v,w)^n) \nt \\
    =& C(w)B(v+w)A(u+v+w)B(u)C(u+v)B(v) + O((u,v,w)^n) \nt \\
    =& C(w)B(v+w)A(u+v+w)C(v)B(u+v)C(u) + O((u,v,w)^n), \lb{ABACBA-2}
\end{align}
where $O((u,v,w)^n)$ is the $n$-th order polynomials of $u$, $v$, and $w$, which include $\omega^n$ terms.

The first, second, fourth, and fifth equalities in \eref{ABACBA-1} accompany correction terms $\omega^n_{123}(u,v)$, $\omega^n_{234}(u,v+w)$, $\omega^n_{123}(u+v,w)$, and $\omega^n_{234}(v,w)$, respectively.
In a similar manner, the second, third, fifth, and sixth equalities in \eref{ABACBA-2} accompany correction terms $\omega^n_{123}(v,w)$, $\omega^n_{234}(u+v,w)$, $\omega^n_{123}(u,v+w)$, and $\omega^n_{234}(u,v)$, respectively.
These two correction terms of the order $n$ should be equal, which reads
\eq{
\omega^n_{123}(u,v)+\omega^n_{123}(u+v,w)-\omega^n_{123}(v,w)-\omega^n_{123}(u,v+w)=\omega^n_{234}(u,v)+\omega^n_{234}(u+v,w)-\omega^n_{234}(v,w)-\omega^n_{234}(u,v+w).
}
This means that the three-site operator $\omega^n(u,v)+\omega^n(u+v,w)-\omega^n(v,w)-\omega^n(u,v+w)$ is an identity operator on three sites expressed as
\eqa{
\omega^n(u,v)+\omega^n(u+v,w)-\omega^n(v,w)-\omega^n(u,v+w)=f(u,v,w)I,
}{omega-f}
where $f(u,v,w)$ is a scalar polynomial of degree $n$.

We finally show that $f(u,v,w)$ is in fact zero.
We compare the determinants of both sides of the braided Yang--Baxter equation with correction terms \eqref{mod-YB-c}.
Since the determinant of an $R$-matrix is independent of its support, the determinant of the left-hand side is computed as
\begin{align}
    \det \cR_{12}(u) \cR_{23}(u+v) \cR_{12}(v)
    &= \det \cR_{12}(u) \det \cR_{23}(u+v) \det \cR_{12}(v)\\
    &= \det \cR_{23}(u) \det \cR_{12}(u+v) \det \cR_{23}(v)\\
    &= \det \cR_{23}(v) \cR_{12}(u+v) \cR_{23}(u),
\end{align}
where we used the multiplicativity of the determinant.
Defining $X(u,v):=\cR_{23}(v) \cR_{12}(u+v) \cR_{23}(u)$, the determinant of the right-hand side of \eqref{mod-YB-c} up to $O((u,v)^n)$ is computed as
\eq{
\det [X(u,v)+\omega^n(u,v)+O((u,v)^{n+1})]=&\det X(u,v) \det [I+X^{-1}(u,v)\omega^n(u,v)+O((u,v)^{n+1})] \nt \\
=&\det X(u,v) \det [I+\omega^n(u,v)+O((u,v)^{n+1})] \nt \\
=&(\det X(u,v))(1+\Tr[\omega^n(u,v)]+O((u,v)^{n+1})).
}
In the second equality, we used $X(u,v)=I+O((u,v)^1)$.
Comparing both sides, we find
\eq{
 \Tr[\omega^n(u,v)]=0.
}
Taking the trace of both sides of \eref{omega-f}, we arrive at
\eq{
0=f(u,v,w)d^3,
}
which implies the desired cocycle condition: 
\eq{
\omega^n(u,v)+\omega^n(u+v,w)-\omega^n(v,w)-\omega^n(u,v+w)=0.
}

\epf

We shall use the equivalence of the cocycle condition and the {\it coboundary condition}.

\blm{
Suppose that $\omega^n(u,v)$ satisfies the cocycle condition \eqref{cocycle} and $\omega^n(u,0)=0$ for all $u$.
Then, $\omega^n(u,v)$ satisfies the coboundary condition, claiming that there is a suitable ${g}(u)$ such that
\eqa{
\omega^n(u,v)={g}(u+v)-{g}(u)-{g}(v).
}{coboundary}
}

\bpf
Let $\del_2 \omega^n$ denote the derivative of $\omega^n$ with respect to its second argument.
Differentiating \eref{cocycle} with respect to $w$ and setting $w=0$, we have
\eqa{
0 + \partial_2 \omega^n(u+v,0) = \partial_2 \omega^n(v,0) + \partial_2 \omega^n(u,v).
}{cocycle-coboundary-mid1}
Defining
\eq{
{g}(x)=\int_0^x dy \del_2 \omega^n(y,0),
}
integrating \eref{cocycle-coboundary-mid1} with respect to $v$ from $v=0$ to $v=u'$ gives
\eq{
{g}(u+u')-{g}(u)={g}(u')+ \omega^n(u,u'),
}
where we used $\omega^n(u,0)=0$.
This is the desired coboundary condition.
\epf

\bpf[Proof of \lref{Sutherland-YBE-s}]

Expand $g$ as an operator-valued formal power series,
\eqa{
g(x)=\sum_{m=0}^{\infty}K_m x^m.
}{g-formal-series}

The contribution of $K_m$ to the coboundary is homogeneous of total degree $m$. Since $\omega^n(u,v)$ is homogeneous of total degree $n$, comparison of homogeneous components gives
\eqa{
\omega^n(u,v)=K_n\bigl((u+v)^n-u^n-v^n\bigr).
}{omega-homogeneous}

For every $k,l\geq 1$ with $k+l=n$, the coefficient of $u^lv^k$ is $\binom{n}{l}K_n$, a nonzero scalar multiple of the same operator $K_n$. Hence all $\cF_{123}^{l,k}$ with $k+l=n$ are mutually proportional and vanish simultaneously. This proves \lref{Sutherland-YBE-s}.
\epf

\section{Algorithms for testing integrability and reconstructing R-matrices}\lb{s:algorithm}

\subsection{Setup}\lb{s:algorithm-setup}

In this section we describe the algorithmic aspects of the integrability test and the reconstruction procedure for a given two-site Hamiltonian density
\eq{
h\in \mathrm{End}(V\otimes V),\qquad \dim V=d.
}
Throughout this section, we assume standard cubic dense matrix multiplication, i.e., $\omega=3$, and do not impose any additional sparsity or symmetry assumptions on $h$.

We use the braided $R$-matrix convention introduced in \sref{Reshetikhin} and consider a regular $R$-matrix of the form
\eq{
\cR(u)=I+hu+\sum_{n\geq 2}\cR^{(n)}u^n.
}
In the reconstruction problem, the coefficients $\cR^{(n)}$ are determined only up to an additive scalar multiple of the identity.
We fix this freedom by working in the traceless gauge
\eq{
\Tr \cR^{(n)}=0,
\qquad n\geq 2,
}
where the trace is taken over $V\otimes V$.

The main claims of this section are summarized in the following two lemmas.

\begin{lem}[Integrability test]\lb{t:algorithm-integrability-test}
For a given $h\in \mathrm{End}(V\otimes V)$, whether the Reshetikhin condition holds can be determined in $O(d^8)$ time and $O(d^6)$ memory.
\end{lem}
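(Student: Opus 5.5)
The test reduces to a linear-algebra membership question: is $C:=[h_{12}+h_{23},[h_{12},h_{23}]]\in\mathrm{End}(V^{\otimes 3})$ contained in the subspace $\mathcal{T}:=\{X_{23}-X_{12}\,:\,X\in\mathrm{End}(V\otimes V)\}$?

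\textbf{Step 1: computing $C$.} We embed $h_{12}=h\otimes I$ and $h_{23}=I\otimes h$ as $d^3\times d^3$ matrices, which takes $O(d^6)$ memory. We then form $[h_{12},h_{23}]$ and the outer commutator using a constant number of dense products of $d^3\times d^3$ matrices. This costs $O(d^9)$ if done naively, so at this step we must exploit the tensor structure to do better. Multiplying $h\otimes I$ by a general $d^3\times d^3$ matrix $M$ amounts to acting with a $d^2\times d^2$ matrix on the first two tensor legs of $M$, for each of the $d\cdot d^3$ remaining index values. That is $d^4$ multiplications of a $d^2\times d^2$ matrix by a $d^2$-vector, costing $O(d^4\cdot d^4)=O(d^8)$ in total. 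The same holds for $I\otimes h$ and for multiplication from the right. Every product in $C$ has one factor of the form $h_{12}$ or $h_{23}$, so $C$ can be computed in $O(d^8)$ time and $O(d^6)$ memory.

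\textbf{Step 2: deciding whether $C\in\mathcal{T}$.} Instead of solving a $d^6\times d^4$ linear system, which would cost $O(d^{14})$, we characterize $\mathcal{T}$ explicitly by partial traces. Assume $C=X_{23}-X_{12}$. Taking $\Tr_1$ gives
\[
\Tr_1 C=d\,X-(\Tr_1X)\otimes I ,
\]
viewed as an operator on sites $2,3$. Taking $\Tr_3$ gives
\[
\Tr_3 C=I\otimes\Tr_2 X-d\,X ,
\]
viewed as an operator on sites $1,2$. These relations determine $X$ up to the ambiguity $X\to X+cI$, which drops out of $X_{23}-X_{12}$. Concretely, we take a further trace to find $\Tr_1X$ modulo a scalar, and so obtain an explicit candidate $\hat X$ from $\Tr_1C$ in $O(d^6)$ operations. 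We then form $\hat X_{23}-\hat X_{12}$ in $O(d^6)$ and compare it entrywise with $C$. The Reshetikhin condition holds if and only if this residual vanishes. This is because any solution $X$ must coincide with $\hat X$ up to a multiple of the identity, and that multiple does not change $X_{23}-X_{12}$.

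The main obstacle is making this candidate formula rigorous. Tracing out site 1 of $\hat X_{23}-\hat X_{12}$ reproduces $\Tr_1C$ only when $C\in\mathcal{T}$, so we need a projection onto $\mathcal{T}$ for which the final residual check is both necessary and sufficient. If the closed-form inversion becomes delicate, the fallback is a least-squares solve of the $d^4$-unknown system. Its normal equations involve only the $d^4\times d^4$ Gram operator of the map $X\mapsto X_{23}-X_{12}$, which is explicitly $2d\,I-(\text{partial-trace terms})$ and can be inverted in $O(d^{12})$. That bound is too large, however, so the explicit partial-trace inversion is preferred, and the overall bound is set by Step 1: $O(d^8)$ time and $O(d^6)$ memory.
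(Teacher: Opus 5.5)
Your proposal is correct and follows essentially the same route as the paper. You compute $[h_{12}+h_{23},[h_{12},h_{23}]]$ in $O(d^8)$ time and $O(d^6)$ memory by exploiting that every product involves a factor acting on only two legs, and you then decide the telescopic form with an $O(d^6)$ partial-trace check. The paper phrases that check as vanishing of the summed one-site and two-site sectors plus the residual sector; your reconstruct-$\hat X$-from-$\Tr_1 C$-and-compare test is equivalent to it.

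The ``main obstacle'' you flag at the end is not actually an obstacle. Your own observation that any solution must equal $\hat X$ up to a multiple of the identity already makes the residual test both necessary and sufficient, so neither a projection onto $\mathcal{T}$ nor the least-squares fallback is needed.
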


\begin{lem}[Reconstruction]\lb{t:algorithm-reconstruction}
For a given $h$ satisfying the assumptions of \tref{main}.
Assume that $H = \sum_i h_{i,i+1}$ is diagonalizable.
Then, the coefficients $\cR^{(2)},\ldots,\cR^{(n)}$ of a regular braided $R$-matrix
\eq{
\cR(u)=I+hu+\sum_{m\geq 2}\cR^{(m)}u^m
}
can be reconstructed in $O(n^2d^6)$ time and $O(nd^4)$ memory.
\end{lem}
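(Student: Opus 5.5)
The statement to prove is the reconstruction lemma (\lref{algorithm-reconstruction}). I would prove it by turning the inductive proof of \lref{Reshetikhin-Sutherland-s} into an explicit recursion and counting its cost. \lref{Sutherland-YBE-s} guarantees that it suffices to enforce only the $uv^{m-1}$ coefficients of the braided Yang--Baxter equation. The key observation is that these coefficients are exactly the coefficients of the Sutherland equation. That equation involves only three sites and a single spectral parameter, so each order can be handled on $V^{\otimes 3}$, a space of dimension $d^3$.

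At step $m$, assume $\cR^{(2)},\dots,\cR^{(m-1)}$ are already fixed in the traceless gauge. I would first compute the $u^{m-1}$ coefficient of the three-site expression
\eq{
\cR_{12}(u)\cR_{23}(u)'\,\text{-type terms} - [\,\cdot\,,h_{23}],
}
that is, the braided form of the modified Sutherland equation \eqref{Sutherland-c-s}. Its coefficient has the form $\cR^{(m)}_{12}-\cR^{(m)}_{23}+W^{(m)}$, where $W^{(m)}$ is a sum of products of two-site coefficients of total degree $m$, each embedded in $V^{\otimes 3}$.

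To keep the cost at $O(m\,d^6)$ per step, I would not expand the products term by term. Instead I would organise $W^{(m)}$ as convolutions $\sum_{a+b=m}\cR^{(a)}_{12}\cR^{(b)}_{23}$ and their mirror images, together with commutators against $h_{23}$. There are $O(m)$ such terms, and each is a product of $d^3\times d^3$ matrices. A naive product would cost $d^9$. Because each factor acts on only two of the three sites, multiplying by it costs $O(d^6)$, which gives $O(m d^6)$ per step.

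By \lref{Reshetikhin-Sutherland-s}, the term $W^{(m)}$ is telescopic, $W^{(m)}=Z_{23}-Z_{12}$. The next task is to extract $Z$, which is where the traceless gauge enters. Taking the partial trace over site 3 and over site 1 gives two linear equations for the two unknowns $Z$ and $\Tr Z$. These can be solved in closed form: $Z = d^{-1}(\Tr_1 W\otimes\text{stuff})$ plus a scalar correction. The traces cost $O(d^6)$. One then sets $\cR^{(m)}=Z-\Tr Z/d^2$. Summing over $m\le n$ gives $O(n^2d^6)$ time. For memory, we store $n$ two-site matrices at $O(nd^4)$ plus a constant number of three-site work arrays at $O(d^6)$. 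The $O(d^6)$ work arrays would exceed the claimed $O(nd^4)$ whenever $n<d^2$, so we stream the three-site arrays, as explained next.

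I expect the memory claim to be the main obstacle. Forming three-site operators explicitly costs $d^6$ memory, while the bound is $O(nd^4)$. I would therefore avoid materialising $W^{(m)}$. The only quantities needed are its partial traces $\Tr_3 W^{(m)}$ and $\Tr_1 W^{(m)}$, and each of these is two-site. For a product such as $\Tr_3(A_{12}B_{23})=A_{12}(\Tr_3 B_{23})$, the partial trace can be computed directly from two-site data by index contraction in $O(d^6)$ time and $O(d^4)$ memory. Finally, I would check separately that the Reshetikhin condition supplies $\cR^{(3)}$ through the same extraction, which is the base case.
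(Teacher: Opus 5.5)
Your architecture matches the paper's: the Sutherland recursion justified by the two lemmas, the traceless gauge, and extracting the next coefficient from partial traces of a three-site residual that is never materialised. The complexity argument, however, breaks at one concrete point.

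\textbf{The $O(d^6)$ multiplication claim is false.} Multiplying by a factor supported on two of the three sites does not cost $O(d^6)$. Forming $\cR^{(a)}_{12}\cR^{(b)}_{23}$ on $V^{\otimes 3}$ costs $O(d^7)$, since it has $d^6$ entries with one internal sum each. Multiplying a general three-site operator by $h_{23}$ costs $O(d^8)$, with $d^6$ entries and two internal sums each. This is the same count the paper gives for the Reshetikhin test. Your third paragraph therefore yields $O(m d^8)$ per step, so the streaming paragraph must carry the time bound as well as the memory bound.

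\textbf{The streaming step covers only factorising terms.} The only case you justify is $\Tr_3(A_{12}B_{23})=A_{12}\Tr_3(B_{23})$. The residual also contains $h_{23}G$ with $G=\sum_j \cR^{(m-1-j)}_{12}\cR^{(j)}_{23}$. The partial trace $P_3\bigl(h_{23}\cR^{(m-1-j)}_{12}\cR^{(j)}_{23}\bigr)$ does not factor, because site $3$ is hit by both $h_{23}$ and $\cR^{(j)}_{23}$ on either side of a factor that acts on site $2$. Contracting its four internal indices at once is again $O(d^8)$.

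\textbf{The missing idea is a contraction order.} First contract $h$ with $\cR^{(j)}$ over the site-$3$ indices to form the superoperator $\mathcal{K}^{(j)}_2(M_2)=P_3\bigl(h_{23}M_2\cR^{(j)}_{23}\bigr)$ on $\mathrm{End}(V_2)$. It has $d^4$ entries and costs $O(d^6)$ to build. Then apply it to the site-$2$ leg of $\cR^{(m-1-j)}_{12}$, which costs $O(d^6)$. This is how the paper obtains $O(m d^6)$ per step with $O(d^4)$ storage per object.

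\textbf{One partial trace is enough.} $\Tr Z$ is not an unknown to solve for: the residual is invariant under $Z\to Z+cI$, so $\Tr Z$ is pure gauge. In the traceless gauge, $W=Z_{23}-Z_{12}$ gives $Z=-P_3(W)-I\otimes P_2\bigl(P_3(W)\bigr)$ with $P_i=\frac1d\Tr_i$. Adding $\Tr_1$ as you propose would only introduce a second non-factorising family, $\Tr_1\bigl(\cR_{12}\cR_{23}h_{12}\bigr)$, which needs the mirrored trick.
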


Below, \sref{integrability-test} describes the algorithmic test of the Reshetikhin condition, corresponding to \lref{algorithm-integrability-test}.
The reconstruction procedure, corresponding to \lref{algorithm-reconstruction}, is summarized in \sref{reconstruction}.

\subsection{Integrability test}\lb{s:integrability-test}

To test whether a given $h$ satisfies the Reshetikhin condition, we first compute
\eq{\lb{e:algorithm-Y}
Y_{123}:=[h_{12}+h_{23},[h_{12},h_{23}]].
}
The Reshetikhin condition is equivalent to the existence of a two-site operator
$X\in \mathrm{End}(V\otimes V)$ such that
\eq{\lb{e:algorithm-difference-form}
Y_{123}=X_{23}-X_{12}.
}

To express this condition in a form better suited for implementation, we decompose $Y_{123}$ according
to the support on which each component acts nontrivially. Splitting the one-site operator space as
\eq{
\mathrm{End}(V)=\mathbb C I\oplus \mathrm{End}_0(V),
}
any three-site operator admits a unique decomposition into its $0$-site, $1$-site, $2$-site, and $3$-site components.
Since $Y_{123}$ is a commutator, its scalar ($0$-site) part vanishes.
We may therefore write
\eq{\lb{e:algorithm-sector-decomposition}
Y_{123}
={}&
Y^{[1]}_1\otimes I_2\otimes I_3
+I_1\otimes Y^{[2]}_2\otimes I_3
+I_1\otimes I_2\otimes Y^{[3]}_3
+Y^{[12]}_{12}\otimes I_3
+I_1\otimes Y^{[23]}_{23}
+Y^{[\mathrm{res}]}_{123},
}
where
$Y^{[1]},Y^{[2]},Y^{[3]}\in \mathrm{End}_0(V)$,
$Y^{[12]},Y^{[23]}\in \mathrm{End}_0(V)\otimes \mathrm{End}_0(V)$, and
$Y^{[\mathrm{res}]}_{123}\in \mathrm{End}_0(V)\otimes \mathrm{End}(V)\otimes \mathrm{End}_0(V)$.

In this decomposition, the Reshetikhin condition is equivalent to requiring the $1$-site and nearest-neighbour
$2$-site sectors to be telescopic and the residual sector to vanish. To make this explicit, we introduce
\eq{\lb{e:algorithm-sector-sums}
\Sigma^{(1)}:=Y^{[1]}+Y^{[2]}+Y^{[3]}\in \mathrm{End}_0(V),
\qquad
\Sigma^{(2)}:=Y^{[12]}+Y^{[23]}\in \mathrm{End}_0(V)\otimes \mathrm{End}_0(V).
}
Then the Reshetikhin condition is equivalent to
\eq{\lb{e:algorithm-sector-test}
\Sigma^{(1)}=0,
\qquad
\Sigma^{(2)}=0,
\qquad
Y^{[\mathrm{res}]}_{123}=0.
}

Indeed, if Eq.~\eqref{e:algorithm-difference-form} holds, then $X_{23}$ and $X_{12}$ are the same two-site operator placed one site
apart. Hence the $1$-site and nearest-neighbour $2$-site sectors necessarily cancel in Eq.~\eqref{e:algorithm-sector-sums}, and the residual component in $\mathrm{End}_0(V)\otimes \mathrm{End}(V)\otimes \mathrm{End}_0(V)$ is absent. Conversely, if
Eq.~\eqref{e:algorithm-sector-test} holds, then the $1$-site and $2$-site sectors are both of telescopic form; explicitly,
$X=-Y^{[12]}-Y^{[1]}\otimes I+I\otimes Y^{[3]}$ satisfies Eq.~\eqref{e:algorithm-difference-form}.

The practical test of the Reshetikhin condition is therefore as follows: decompose $Y_{123}$ as in
Eq.~\eqref{e:algorithm-sector-decomposition} and check Eq.~\eqref{e:algorithm-sector-test}.
In computations, these sectors can be extracted by partial traces. The $1$-site sector is obtained from double partial
traces. After this contribution has been subtracted, the nearest-neighbour $2$-site sector is read off from
single partial traces.

We now estimate the computational cost. A naive implementation would explicitly construct $h_{12}$ and $h_{23}$
as $d^3\times d^3$ matrices on $V^{\otimes 3}$ and then evaluate Eq.~\eqref{e:algorithm-Y} by dense matrix multiplication.
This gives a time complexity of $O(d^9)$ and a memory cost of $O(d^6)$.

In practice, however, one can exploit the fact that $h_{12}$ and $h_{23}$ act only on neighbouring legs. The inner
commutator
\eq{
C_{123}:=[h_{12},h_{23}]
}
has components
\eq{\lb{e:algorithm-C-components}
(C_{123})_{abc}^{a'b'c'}
=
\sum_x h_{ab}^{a'x}h_{xc}^{b'c'}
-\sum_x h_{bc}^{xc'}h_{ax}^{a'b'}.
}
For each choice of external indices, Eq.~\eqref{e:algorithm-C-components} contains only a single internal summation;
computing all components therefore costs $O(d^7)$.

The outer commutator
\eq{
Y_{123}=[h_{12}+h_{23},C_{123}]
}
is handled similarly. For example,
\eq{\lb{e:algorithm-outer-contraction}
(h_{12}C_{123})_{abc}^{a'b'c'}
=
\sum_{x,y}h_{xy}^{a'b'}(C_{123})_{abc}^{xyc'}.
}
The same contraction pattern appears in the terms $h_{23}C_{123}$, $C_{123}h_{12}$, and $C_{123}h_{23}$.
These contractions contain two internal summations for each choice of external indices, as illustrated in
Eq.~\eqref{e:algorithm-outer-contraction}, so this stage costs $O(d^8)$ and dominates the computation.
As a result, the full construction of $Y_{123}$ costs $O(d^8)$.

The bottleneck of the overall Reshetikhin test is therefore the construction of $Y_{123}$ rather than the subsequent
difference-form check. Once $Y_{123}$ has been computed, the sector decomposition, the extraction of the $1$-site
and nearest-neighbour $2$-site components, and the checks in Eq.~\eqref{e:algorithm-sector-test} all cost at most
$O(d^6)$ and do not affect the leading asymptotics.
The overall complexity of the integrability test is thus
\eq{
T_{\mathrm{test}}(d)=O(d^8),\qquad S_{\mathrm{test}}(d)=O(d^6).
}

\subsection{Reconstruction algorithm}\lb{s:reconstruction}

Given a two-site operator $h$ satisfying the Reshetikhin condition, one can reconstruct the coefficients of a regular braided $R$-matrix
\eq{
\cR(u)=I+hu+\sum_{m\ge2}\cR^{(m)}u^m
}
order by order. Rather than solving the full two-parameter Yang--Baxter equation coefficientwise, we use the one-variable recursion coming from the braided Sutherland equation.

The braided Sutherland equation used in this subsection is
\eq{
\cR_{12}(u)\cR'_{23}(u)-\cR'_{12}(u)\cR_{23}(u)
=
h_{23}\cR_{12}(u)\cR_{23}(u)
-
\cR_{12}(u)\cR_{23}(u)h_{12}.
}
Substituting
$
\cR(u)=\sum_{m\ge0}\cR^{(m)}u^m
$, $
\cR^{(0)}=I
$, and $
\cR^{(1)}=h$, 
and comparing the coefficients of $u^n$, one obtains
\eq{\lb{e:algorithm-recursion}
(n+1)\bigl(\cR^{(n+1)}_{23}-\cR^{(n+1)}_{12}\bigr)=\Phi_n,
}
where
\eq{
G_n:=\sum_{j=0}^n \cR^{(n-j)}_{12}\cR^{(j)}_{23}
}
and
\eq{\lb{e:algorithm-Phi}
\Phi_n
=
h_{23}G_n
-
G_nh_{12}
+
\sum_{j=1}^n(n+1-2j)\cR^{(n+1-j)}_{12}\cR^{(j)}_{23}.
}

At this stage, a naive implementation would treat each term in Eq.~\eqref{e:algorithm-Phi} as a $d^3\times d^3$ matrix product on $V^{\otimes 3}$. Since the right-hand side contains $O(n)$ terms, the update of the $(n+1)$st coefficient would then cost $O(nd^9)$, and the cumulative cost up to order $n$ would be $O(n^2d^9)$.

The key point is that one never needs to construct the full three-site operator $\Phi_n$. As shown in the previous subsection, to recover the corresponding two-site operator from a difference-form three-site operator, it is enough to compute
$
P_3(\Phi_n)
$ with
$P_i:=\frac1d\Tr_i$.
Thus, at each step, the actual task is to evaluate $P_3(\Phi_n)$ rather than $\Phi_n$ itself.

The computation of $P_3(\Phi_n)$ is performed term by term. First, define
\eq{
\rho^{(j)}_2:=P_3(\cR^{(j)}_{23}).
}
Then
\eq{
P_3\!\left(\cR^{(n-j)}_{12}\cR^{(j)}_{23}\right)
=
\cR^{(n-j)}_{12}\,\rho^{(j)}_2,
}
so the second and third terms on the right-hand side of Eq.~\eqref{e:algorithm-Phi} can be evaluated directly as two-site operators using $\rho^{(j)}_2$.

By contrast, the terms coming from the first term in Eq.~\eqref{e:algorithm-Phi},
\eq{
P_3\!\left(h_{23}\cR^{(n-j)}_{12}\cR^{(j)}_{23}\right),
}
do not factorize in this way. For these, define a superoperator acting on site $2$ by
\eq{
\mathcal K^{(j)}_2(M_2):=
P_3\!\left(h_{23}M_2\cR^{(j)}_{23}\right).
}
Writing its action on the matrix units $(|\alpha\rangle\langle\beta|)_2$ as
\eq{
\mathcal K^{(j)}_2\bigl((|\alpha\rangle\langle\beta|)_2\bigr)
=
\sum_{\alpha',\beta'}
(\mathcal K^{(j)}_2)_{\alpha\beta}^{\alpha'\beta'}
(|\alpha'\rangle\langle\beta'|)_2,
}
its matrix elements are
\eq{
(\mathcal K^{(j)}_2)_{\alpha\beta}^{\alpha'\beta'}
=
\frac1d\sum_{\gamma,\delta}
h_{\alpha\delta}^{\alpha'\gamma}\,
(\cR^{(j)})_{\beta'\gamma}^{\beta\delta}.
}
By linear extension, $\mathcal K^{(j)}_2$ acts on the site-$2$ factor of a general two-site operator. In this notation,
\eq{\lb{e:algorithm-projected-Phi}
P_3(\Phi_n)
=
\sum_{j=0}^n \mathcal K^{(j)}_2\!\left(\cR^{(n-j)}_{12}\right)
-
\sum_{j=0}^n \cR^{(n-j)}_{12}\,\rho^{(j)}_2 h_{12}
+
\sum_{j=1}^n(n+1-2j)\cR^{(n+1-j)}_{12}\,\rho^{(j)}_2.
}

In practice, one stores the data
$\cR^{(j)}$, $\rho^{(j)}_2$, $\mathcal K^{(j)}_2$
for each $j$. One first computes $P_3(\Phi_n)$ from Eq.~\eqref{e:algorithm-projected-Phi}, and then reconstructs $\cR^{(n+1)}$ by the same difference-form extraction used in the previous subsection. Concretely, applying $P_3$ to Eq.~\eqref{e:algorithm-recursion} gives
\eq{\lb{e:algorithm-extraction}
\cR^{(n+1)}
=
-\frac{1}{n+1}P_3(\Phi_n)
-
\frac{1}{n+1}I\otimes P_2(P_3(\Phi_n)),
}
where the traceless gauge fixes the scalar freedom.

We now estimate the computational complexity. Let us first discuss the dependence on the order $n$. Updating the $(n+1)$st coefficient requires processing the $O(n)$ terms with $j=0,\dots,n$, so the cumulative cost up to order $n$ is $O(n^2)$. This is dramatically smaller than what one obtains by expanding the full Yang--Baxter equation directly.

Next we examine the dependence on $d$. There are three tasks to consider: constructing $\cR^{(n+1)}$, constructing $\rho^{(j)}_2$, and constructing $\mathcal K^{(j)}_2$.

First, consider the construction of $\cR^{(n+1)}$. This includes the computation of $P_3(\Phi_n)$ through Eq.~\eqref{e:algorithm-projected-Phi}. The second and third terms in Eq.~\eqref{e:algorithm-projected-Phi} are handled through $\rho^{(j)}_2$. Since $\rho^{(j)}_2$ is a $d\times d$ matrix on site $2$, multiplying
$
\cR^{(n-j)}_{12}\,\rho^{(j)}_2
$
costs $O(d^5)$, so these contributions are lower-order. The dominant contribution comes from the first term in Eq.~\eqref{e:algorithm-projected-Phi},
$
\mathcal K^{(j)}_2\!\left(\cR^{(n-j)}_{12}\right)
$.
A single application of $\mathcal K^{(j)}_2$ to an operator on site $2$ costs $O(d^4)$, while $\cR^{(n-j)}_{12}$ is a general two-site operator carrying $d^2$ degrees of freedom on site $1$. Therefore the total cost of applying $\mathcal K^{(j)}_2$ to $\cR^{(n-j)}_{12}$ is $O(d^6)$. It follows that the update cost for $\cR^{(n+1)}$ is
$
O(nd^6),
$
and this is the bottleneck at each step.

Next, the construction of
$
\rho^{(j)}_2=P_3(\cR^{(j)}_{23})
$
is just a partial trace of a two-site operator. The output has $d^2$ components, and each component is computed with a single internal summation, so the cost is $O(d^3)$.

Finally, consider the construction of $\mathcal K^{(j)}_2$. For fixed $\alpha,\beta$, the image
$
\mathcal K^{(j)}_2\bigl((|\alpha\rangle\langle\beta|)_2\bigr)
$
is an operator on site $2$ with $d^2$ output components. Each output component is given by a double sum, so the cost of computing this single image is $O(d^4)$. Doing this for all $d^2$ pairs $(\alpha,\beta)$ gives
$O(d^6)$
for the full construction of $\mathcal K^{(j)}_2$.

Therefore the total time complexity up to order $n$ is
\eq{
T_{\mathrm{construct}}(n,d)=O(n^2d^6),
}
while the memory usage is dominated by storing
$
\cR^{(1)},\dots,\cR^{(n)},
$
and the superoperators $\mathcal K^{(0)},\dots,\mathcal K^{(n)}$, each of which has $O(d^4)$ components. Hence
\eq{
S_{\mathrm{construct}}(n,d)=O(nd^4).
}

In conclusion, once the Reshetikhin condition is satisfied, the corresponding braided $R$-matrix can be reconstructed by the braided Sutherland recursion in
$O(n^2d^6)$ time and $O(nd^4)$ memory.

\section{Remarks}

\subsection{Convergence of the reconstructed $R$-matrix}
The formal series reconstructed above has a nonzero radius of
convergence. Indeed, let
\eq{
r_n:=\|\cR^{(n)}\|,\qquad c:=\|h\|,
}
where $\|\cdot\|$ is the operator norm. Since the normalized partial
trace is contractive, \eref{e:algorithm-extraction} gives
\eq{
r_{n+1}\leq \frac{2}{n+1}\|\Phi_n\|.
}
Using \eref{e:algorithm-Phi}, submultiplicativity, and
$|n+1-2j|\leq n+1$, we obtain
\eq{\lb{r-ineq}
r_{n+1}
\leq
4c\sum_{j=0}^{n}r_jr_{n-j}
+
2\sum_{j=1}^{n}r_jr_{n+1-j},
\qquad n\geq1.
}
Let $a_0=1$, $a_1=c$, and define $a_n (n\geq 2)$ by replacing the inequality \eqref{r-ineq} with equality.
Multiplying the recurrence by $z^{n+1}$ and summing over $n\geq1$,
the Cauchy product formula gives
\eq{
A(z)-1-cz
=
4cz\bigl(A(z)^2-1\bigr)
+
2\bigl(A(z)-1\bigr)^2,
}
where $A(z)$ is the generating function $A(z) = \sum_{n\geq 0} a_n z^n$.
Equivalently,
$A(z)$ formally satisfies
\eq{
F(A(z),z)=0,
\qquad
F(w,z):=
w-1-cz-4cz(w^2-1)-2(w-1)^2.
}
Since
$
F(1,0)=0$, 
$
\partial_wF(1,0)=1\neq0,
$
the analytic implicit-function theorem yields a unique analytic
solution $A(z)$ with $A(0)=1$ in a neighborhood of $z=0$.
Its Taylor coefficients coincide with the recursively defined
coefficients $a_n$. Hence the majorant series has a nonzero radius
of convergence. Thus
$r_n\leq a_n$ implies that
$\check R(u)=\sum_{n\geq0}\check R^{(n)}u^n$ converges for sufficiently
small $|u|$.

Consequently, the formal Yang--Baxter identity becomes an analytic identity for sufficiently small
$u,v$ with $|u|+|v|$ inside the convergence disk.

\subsection{Technical remarks on boundary terms associated with the boost operator}\lb{s:boost-boundary}

In the main text, we treat the boost operator $B$ without keeping track of boundary terms.
In the standard treatment of the boost operator, one often first considers an infinite system, where boundary effects are absent, and then regards the resulting relation as the corresponding relation in a finite periodic system.
This procedure is conventional and widely used in the literature~\cite{tetel1982lorentz, deLeeuwEtAl2021Boost}.
Nevertheless, since the treatment of boundary terms may deserve some clarification, we briefly explain below why this procedure is eventually justified.

\bigskip

The boost operator $B=\sum_j j h_{j,j+1}$ is not directly compatible with the periodic boundary condition, because the coefficient $j$ is discontinuous across the boundary.
Thus, commutators involving $B$, such as $[B,Q]$, should not be interpreted literally as commutators of well-defined finite-size operators.
Rather, they should be regarded as a formal prescription for extracting the corresponding bulk local operator.

This idea is most easily understood for quantities that are sums of local quantities.
Consider a large system of length $L$ with periodic boundary conditions and a $k$-local quantity $A=\sum_{i=1}^L A_i$, where $A_i$ is supported on the $k$ consecutive sites $i,i+1,\ldots,i+k-1$.
Since the boost operator $B$ is a sum of two-site terms and $A$ is a sum of $k$-local terms, the local terms appearing in their commutator
\eq{
[B,A]=[\sum_j j h_{j,j+1},\sum_i A_i]=\sum_{i,j} j[h_{j,j+1}, A_i]
}
are supported on at most $k+1$ consecutive sites in the bulk.
Far from the boundary, all computations are well defined, and boundary terms do not affect the result.
In particular, as shown in the first part of \lref{Hokkyo}, if $A$ is conserved (i.e., $[A,H]=0$), the resulting bulk quantity is translationally invariant.
Keeping this in mind, we may regard the commutator $[B,A]$ as a prescription for extracting the bulk quantity computed in this way.

From these arguments, we can determine which types of calculations are justified and which are not.
For example, nested commutators involving the boost operator and local operators can be justified, because the commutator of the boost operator with a local operator is again local.
Hence, the derivation of \lref{Hokkyo} given in \sref{boost} (except for the last step concerning $[H,[H,Q]]=0$) is harmless.
On the other hand, arguments based on energy eigenstates are more subtle.
In infinite-size systems, the set of energy eigenstates does not necessarily span the entire state space and eigenenergies might be ill-defined (unbounded), while in finite-size systems the commutator $[B,A]$ cannot be interpreted literally without specifying how the boundary terms are treated.
For example, let $\ket{\psi_i}$ be an energy eigenstate satisfying $H\ket{\psi_i}=E_i\ket{\psi_i}$.
One might be tempted to compute
\eq{
\braket{\psi_i|\QB{3}|\psi_i}=\braket{\psi_i|[B,H]|\psi_i}\overset{?}{=}\braket{\psi_i|BH|\psi_i}-\braket{\psi_i|HB|\psi_i}=(E_i-E_i)\braket{\psi_i|B|\psi_i}=0 \hspace{10pt} (?)
}
This computation, however, is not justified.
The relation $\QB{3}=[B,H]$ should be understood as a bulk relation obtained after discarding boundary terms, and not as an operator identity involving a well-defined boost operator $B$ on the finite periodic chain.
Therefore, one cannot insert finite-size energy eigenstates into this formal commutator in the above manner.

\bigskip

A more delicate issue arises when we consider commutators between the boost operator $B$ and the transfer matrix $T(u)$, which is a nonlocal operator.
Such commutators appear in the arguments in \sref{boost} (in the case with the Yang--Baxter equation) and \sref{Reshetikhin-Sutherland} (in the derivation of \lref{QT=QB-s}).
In this case, the above justification is not directly applicable.
Nevertheless, the relation between the two local quantities $\QT{n}$ and $\QB{n}$, which is the main point of these subsections, can still be justified.

To keep track of boundary contributions, we first assign independent spectral parameters $u_1,\ldots,u_L$ to the $L$ local $R$-matrices $R_{a1},\ldots , R_{aL}$.
At the end of the calculation, we set $u_1=u_2=\cdots=u_L=u$.
Accordingly, we introduce the inhomogeneous transfer matrix $T(u)=\Tr_a[\prod_{i=1}^LR_{ai}(u_i)]$.
After the homogeneous specialization $u_1=\cdots=u_L=u$, the ordinary derivative with respect to $u$ is given by
\eq{
\frac{d}{du}=\sum_i \frac{d}{du_i}.
}
Following the argument in \sref{boost}, we obtain \eref{BT-com} with an explicit boundary term:
\eqa{
[B,T(u)]=\frac{d}{du}T(u)-L\frac{d}{du_1}T(u).
}{BT-boundary}
Unlike the bulk relation without the boundary term, this relation is an exact finite-size identity for any $L$.
When we consider the Sutherland equation with a correction term \eqref{Sutherland-c-s} in \sref{Reshetikhin-Sutherland}, this relation acquires an additional $O(u^{n-1})$ correction.
Following the argument in \sref{boost}, we obtain
\eqa{
[B,\ln T(u)]=\frac{d}{du}\ln T(u)-L\frac{d}{du_1}\ln T(u).
}{BlnT-boundary}

\newcommand{\ad}{{\rm ad}}

For our purpose, we consider $\ln S^{-1}T(u)$ instead of $\ln T(u)$ itself.
To this end, we first consider a counterpart of \eref{BT-boundary}, which is expressed as
\eqa{
[B,S^{-1}T(u)]
=\frac{d}{du}S^{-1}T(u)-L\frac{d}{du_1}S^{-1}T(u)-(H-Lh_{L,1})S^{-1}T(u).
}{BST-boundary}

To further transform this relation, we write $Y=\ln S^{-1}T(u)$ for brevity and introduce the adjoint action $\ad_Y(X):=[Y,X]$.
A useful identity of the adjoint action is
\eq{
e^{-\ad_Y}X=e^{-Y}Xe^Y,
}
where  the exponential of the adjoint action is defined through its Taylor expansion: $e^{-\ad_Y}=\sum_{n=0}^\infty \frac{(-1)^n}{n!}(\ad_Y)^n$.
Other functions of $\ad_Y$ are defined analogously.
In particular, we use the following two functions;
\eq{
K(\ad_Y):=\frac{1-e^{-\ad_Y}}{\ad_Y}, \hspace{15pt}
\Phi(\ad_Y):=\frac{\ad_Y}{e^{\ad_Y}-1},
}
which satisfy
\eq{
K^{-1}(\ad_Y)e^{-\ad_Y}=\Phi(\ad_Y).
}
Using these functions, we obtain the identities
\balign{
e^{-Y}\frac{\del}{\del u_i}e^Y &= K(\ad_Y) \frac{\del}{\del u_i}Y, \\
e^{-Y}Xe^Y&=e^{-\ad_Y}X \\
e^{-Y}[X,e^Y] &= K(\ad_Y) [X,Y],
}
which follow directly from their Taylor expansions.

We now substitute $S^{-1}T(u)=e^Y$ into \eref{BST-boundary} and multiply the resulting equation by $K^{-1}(\ad_Y) e^{-Y}$ from the left.
Using the above identities of $\ad_Y$, we obtain
\eq{
  [B,Y]=\frac{d}{du}Y-L\frac{d}{du_1}Y-\Phi(\ad_Y)(H-Lh_{L,1})=\frac{d}{du}Y-L\frac{d}{du_1}Y-H+L\Phi(\ad_Y)h_{L,1},
}
where in the second equality we used $\ad_Y H=[Y,H]=0$ and $\Phi(\ad_Y)=1-\frac12 \ad_Y+\cdots$.
We then use the series expansion of $Y=\ln S^{-1}T(u)$:
\eqa{
\ln S^{-1}T(u)=\QT{2}u+\QT{3}\frac{u^2}{2}+\QT{4}\frac{u^3}{6}+\cdots.
}{lnT-expand}
When the inhomogeneous parameters $u_1,\ldots,u_L$ are introduced, each power of $u$ in this expansion is replaced by a product of the corresponding local parameters, according to the support of the local density of the charge.
For example, the third term $\QT{3}u^2/2$ becomes $\sum_i \QT{3, i}{u_iu_{i+1}}/{2}$, where $\QT{3,i}$ denotes the local density of $\QT{3}$ supported on sites $i$, $i+1$, and $i+2$.

Importantly, a direct calculation shows that $\QT{n}$ is an $n$-local quantity even without assuming the Yang--Baxter equation.
Thus, the boundary term $L\frac{d}{du_1}\ln S^{-1}T(u)$ is supported near the boundary around site $1$.
More precisely, the contribution of $\QT{n}$ to $L\frac{d}{du_1}\ln S^{-1}T(u)$ is supported only near the boundary, namely on sites $1\leq i\leq n$ and $L-n+3\leq i\leq L$, because a nonvanishing local density of $\QT{n}$ contributing to $\partial/\partial u_1$ must have its support including sites $1$ and $2$.
Since taking a commutator with $B$ can enlarge the support of a local operator by at most one site, repeated commutators of a boundary term with $B$ remain supported near the boundary.
For the same reason, the other boundary term $L\Phi(\ad_Y)h_{L,1}$ is also localized near the boundary at any fixed order in $u$, since $h_{L,1}$ is supported on the boundary bond and the contribution of order $u^{n-1}$ from $\ad_Y$ is a commutator with the $n$-local operator $\QT{n}$.

Hence, we conclude that
\eq{
\QT{m}=&\ft{\frac{d^{m-1}}{du^{m-1}}\ln S^{-1}T(u)}{u=0}=\ft{\frac{d^{m-2}}{du^{m-2}}\( [B,\ln S^{-1}T(u)]-L\frac{d}{du_1}\ln S^{-1}T(u)-H+L\Phi(\ad_Y)h_{L,1}\)}{u=0}=\cdots \nt \\
=&\underbrace{[B,\cdots,[B}_{m-2 \ \text{copies}},\frac{d}{du}\ln S^{-1}T(u)]\cdots]-L\ft{\frac{d^{m-2}}{du^{m-2}}\frac{d}{du_1}\ln S^{-1}T(u)}{u=0}+L\ft{\frac{d^{m-3}}{du^{m-3}}[B,\frac{d}{du_1}\ln S^{-1}T(u)]}{u=0} \nt \\
&+L\ft{\frac{d^{m-4}}{du^{m-4}}[B,[B,\frac{d}{du_1}\ln S^{-1}T(u)]]}{u=0}+\cdots + L\ft{\frac{d^{m-2}}{du^{m-2}}\Phi(\ad_Y)h_{L,1}}{u=0}\nt \\
=&\QB{m}+(\text{terms  around  the boundary}). \lb{QT=QB-boundary}
}
Here, we used the fact that, since we set $u=0$ after differentiating at most $m-1$ times, the terms of order $u^n$ with $n\geq m$ in \eref{lnT-expand} do not contribute.
Therefore, all boundary contributions are supported near the boundary bond between sites $L$ and $1$.

The derivation of \lref{QT=QB-s} given in \sref{Reshetikhin-Sutherland} can be justified by a similar argument, since the $O(u^{m-1})$ correction and the boundary terms can be treated independently.
After applying the above transformation the required number of times, the correction term still remains of order $O(u)$ and hence vanishes when we set $u=0$.
In the case of the modified Sutherland equation with an $O(u^{m-1})$ correction, given in \eqref{Sutherland-c-s}, the relation \eqref{boost-lnT-c} with an explicit boundary term becomes
\eq{
[B,\ln S^{-1}T(u)]=\frac{d}{du}\ln S^{-1}T(u)-L\frac{d}{du_1}\ln S^{-1}T(u)-H+L\Phi(\ad_Y)h_{L,1}+O(u^{m-1}).
}
Using this relation, the calculation in \eqref{QT=QB-boundary} is modified as
\eq{
\QT{m}=&\ft{\frac{d^{m-1}}{du^{m-1}}\ln S^{-1}T(u)}{u=0}=\ft{\frac{d^{m-2}}{du^{m-2}}\( [B,\ln S^{-1}T(u)]- L\frac{d}{du_1}\ln S^{-1}T(u)-H+L\Phi(\ad_Y)h_{L,1}+O(u^{m-1})\)}{u=0}=\cdots \nt \\
=&\underbrace{[B,\cdots,[B}_{m-2 \ \text{copies}},\frac{d}{du}\ln S^{-1}T(u)]\cdots]-L\ft{\frac{d^{m-2}}{du^{m-2}}\frac{d}{du_1}\ln S^{-1}T(u)}{u=0}+L\ft{\frac{d^{m-3}}{du^{m-3}}[B,\frac{d}{du_1}\ln S^{-1}T(u)]}{u=0} \nt \\
&+L\ft{\frac{d^{m-4}}{du^{m-4}}[B,[B,\frac{d}{du_1}\ln S^{-1}T(u)]]}{u=0}+\cdots  + L\ft{\frac{d^{m-2}}{du^{m-2}}\Phi(\ad_Y)h_{L,1}}{u=0}+ \ft{O(u)}{u=0} \nt \\
=&\QB{m}+(\text{terms  around  the boundary}).
}
Thus, the correction term $O(u^{m-1})$ in the modified Sutherland equation does not contribute to this calculation.
Consequently, the bulk part of $\QT{m}$ agrees with $\QB{m}$.

\end{document}